\newcommand{\field}[1]{\ensuremath{\mathbb{F}_{#1}}}
\newcommand{\Oh}{\ensuremath{\mathcal{O}}}
\newcommand{\eps}{\ensuremath{\varepsilon}}
\newcommand{\expect}{\ensuremath{\mathbb{E}}}
\newcommand{\pr}{\ensuremath{\mathbb{P}}}
\newcommand{\colrel}[1]{\ensuremath{\vartriangleleft_{#1}}}
\newcommand{\badcol}[2]{\ensuremath{\mathrm{Bad}_{#1}(#2)}}
\begin{document}

\title[Local Multicoloring Algorithms]{Local Multicoloring Algorithms:\\
  \small\it Computing a Nearly-Optimal TDMA Schedule in Constant Time}
\author[tds]{F. Kuhn}{Fabian Kuhn}
\address[tds]{MIT, Computer Science and Artificial Intelligence Lab
\newline 32 Vassar St, Cambridge, MA 02139, USA}
\email{fkuhn@csail.mit.edu}

\thanks{For space reasons, most proofs are omitted from this extended
  abstract. A full version can be received from the author's web site
  at
  \texttt{http://people.csail.mit.edu/fkuhn/publications/multicoloring.pdf}.}

\keywords{distributed algorithms, graph coloring, local algorithms,
  medium access control, multicoloring, TDMA, wireless networks}

\begin{abstract}
  We are given a set $V$ of autonomous agents (e.g.\ the computers of
  a distributed system) that are connected to each other by a graph
  $G=(V,E)$ (e.g.\ by a communication network connecting the agents).
  Assume that all agents have a unique ID between $1$ and $N$ for a
  parameter $N\ge|V|$ and that each agent knows its ID as well as the
  IDs of its neighbors in $G$. Based on this limited information,
  every agent $v$ must autonomously compute a set of colors
  $S_v\subseteq C$ such that the color sets $S_u$ and $S_v$ of
  adjacent agents $u$ and $v$ are disjoint. We prove that there is a
  deterministic algorithm that uses a total of
  $|C|=\Oh(\Delta^2\log(N)/\eps^2)$ colors such that for every node
  $v$ of $G$ (i.e., for every agent), we have $|S_v|\ge
  |C|\cdot(1-\eps)/(\delta_v+1)$, where $\delta_v$ is the degree of
  $v$ and where $\Delta$ is the maximum degree of $G$. For
  $N=\Omega(\Delta^2\log\Delta)$, $\Omega(\Delta^2+\log\log N)$ colors
  are necessary even to assign at least one color to every node (i.e.,
  to compute a standard vertex coloring). Using randomization, it is
  possible to assign an $(1-\eps)/(\delta+1)$-fraction of all colors
  to every node of degree $\delta$ using only
  $\Oh(\Delta\log|V|/\eps^2)$ colors w.h.p.  We show that this is
  asymptotically almost optimal. For graphs with maximum degree
  $\Delta=\Omega(\log|V|)$, $\Omega(\Delta\log|V|/\log\log|V|)$ colors
  are needed in expectation, even to compute a valid coloring.

  The described multicoloring problem has direct applications in the
  context of wireless ad hoc and sensor networks. In order to
  coordinate the access to the shared wireless medium, the nodes of
  such a network need to employ some medium access control (MAC)
  protocol. Typical MAC protocols control the access to the shared
  channel by time (TDMA), frequency (FDMA), or code division multiple
  access (CDMA) schemes. Many channel access schemes assign a fixed
  set of time slots, frequencies, or (orthogonal) codes to the nodes
  of a network such that nodes that interfere with each other receive
  disjoint sets of time slots, frequencies, or code sets. Finding a
  valid assignment of time slots, frequencies, or codes hence directly
  corresponds to computing a multicoloring of a graph $G$. The
  scarcity of bandwidth, energy, and computing resources in ad hoc and
  sensor networks, as well as the often highly dynamic nature of these
  networks require that the multicoloring can be computed based on as
  little and as local information as possible.
\end{abstract}

\maketitle

\vspace*{-5mm}

\section{Introduction}
\label{sec:intro}

In this paper, we look at a variant of the standard vertex coloring
problem that we name graph \emph{multicoloring}. Given an $n$-node
graph $G=(V,E)$, the goal is to assign a set $S_v$ of colors to each
node $v\in V$ such that the color sets $S_u$ and $S_v$ of two adjacent
nodes $u\in V$ and $v\in V$ are disjoint while at the same time, the
fraction of colors assigned to each node is as large as possible and
the total number of colors used is as small as possible. In
particular, we look at the following \emph{distributed} variant of
this multicoloring problem. Each node has a unique identifier (ID)
between $1$ and $N$ for an integer parameter $N\ge n$. The nodes are
\emph{autonomous agents} and we assume that every agent has only very
limited, \emph{local} information about $G$. Specifically, we assume
that every node $v\in V$ merely knows its own ID as well as the IDs of
all its neighbors. Based on this local information, every node $v$
needs to compute a color set $S_v$ such that the color sets computed
by adjacent nodes are disjoint. Since our locality condition implies
that every node is allowed to communicate with each neighbor only
once, we call such a a distributed algorithm a \emph{one-shot
  algorithm}.

We prove nearly tight upper and lower bounds for deterministic and
randomized algorithms solving the above distributed multicoloring
problem. Let $\Delta$ be the largest degree of $G$. We show that for
every $\eps\in(0,1)$, there is a deterministic multicoloring algorithm
that uses $\Oh(\Delta^2\log(N)/\eps^2)$ colors and assigns a
$(1-\eps)/(\delta+1)$-fraction of all colors to each node of degree
$\delta$. Note that because a node $v$ of degree $\delta$ does not
know anything about the topology of $G$ (except that itself has
$\delta$ neighbors), no one-shot multicoloring algorithm can assign
more than a $1/(\delta+1)$-fraction of the colors to all nodes of
degree $\delta$ (the nodes could be in a clique of size $\delta+1$).
The upper bound proof is based on the probabilistic method and thus
only establishes the existence of an algorithm. We describe an
algebraic construction yielding an explicit algorithm that achieves
the same bounds up to polylogarithmic factors.  Using
$\Oh(\Delta^2\log^2N)$ colors, for a value $\eps>0$, the algorithm
assigns a $\eps/\Oh(\delta^{1+\eps}\log N)$-fraction of all colors to
nodes of degree $\delta$. At the cost of using
$\Oh(\Delta^{\log^*N}\log N)$ colors, it is even possible to improve
the fraction of colors assigned to each node by a factor of $\log N$.
The deterministic upper bound results are complemented by a lower
bound showing that if $N=\Omega(\Delta^2\log\Delta)$, even for the
standard vertex coloring problem, every deterministic one-shot
algorithm needs to use at least $\Omega(\Delta^2+\log\log N)$ colors.

If we allow the nodes to use randomization (and only require that the
claimed bounds are obtained with high probability), we can do
significantly better. In a randomized one-shot algorithm, we assume
that every node can compute a sequence of random bits at the beginning
of an algorithm and that nodes also know their own random bits as well
as the random bits of the neighbors when computing the color set. We
show that for $\eps\in (0,1)$, with high probability,
$\Oh(\Delta\log(n)/\eps^2)$ colors suffice to assign a
$(1-\eps)/(\delta+1)$-fraction of all colors to every node of degree
$\delta$. If $\log n\le\Delta\le n^{1-\eps}$ for a constant $\eps>0$,
we show that every randomized one-shot algorithm needs at least
$\Omega(\Delta\log n/\log\log n)$ colors. Again, the lower bound even
holds for standard vertex coloring algorithms where every node only
needs to choose a single color.

Synchronizing the access to a common resource is a typical application
of coloring in networks. If we have a $c$-coloring of the network
graph, we can partition the resource (and/or time) into $c$ parts and
assign a part to each node $v$ depending on $v$'s color. In such a
setting, it seems natural to use a multicoloring instead of a standard
vertex coloring and assign more than one part of the resource to every
node. This allows to use the resource more often and thus more
efficiently.

The most prominent specific example of this basic approach occurs in
the context of media access control (MAC) protocols for wireless ad
hoc and sensor networks. These networks consist of autonomous wireless
devices that communicate with each other by the use of radio signals.
If two or more close-by nodes transmit radio signals at the same time,
a receiving node only hears the superposition of all transmitted
signals. Hence, simultaneous transmissions of close-by nodes interfere
with each other and we thus have to control the access to the wireless
channel. A standard way to avoid interference between close-by
transmissions is to use a time (TDMA), frequency (FDMA), or code
division multiple access (CDMA) scheme to divide the channel among the
nodes. A TDMA protocol divides the time into time slots and assigns
different time slots to conflicting nodes. When using FDMA, nodes that
can interfere with each other are assigned different frequencies,
whereas a CDMA scheme uses different (orthogonal) codes for
interfering nodes. Classically, TDMA, FDMA, and CDMA protocols are
implemented by a standard vertex coloring of the graph induced by the
interference relations. In all three cases, it would be natural to use
the more general multicoloring problem in order to achieve a more
effective use of the wireless medium.  Efficient TDMA schedules, FDMA
frequency assignments, or CDMA code assignments are all directly
obtained from a multicoloring of the interference graph where the
fraction of colors assigned to each nodes is as large as possible. It
is also natural to require that the total number of colors is small.
This keeps the length of a TDMA schedule or the total number of
frequencies or codes small and thus helps to improve the efficiency
and reduce unnecessary overhead of the resulting MAC protocols.

In contrast to many wired networks, wireless ad hoc and sensor
networks typically consist of small devices that have limited
computing and storage capabilities. Because these devices operate on
batteries, wireless nodes also have to keep the amount of computation
and especially communication to a minimum in order to save energy and
thus increase their lifetime. As the nodes of an ad hoc or sensor
network need to operate without central control, everything that is
computed, has to be computed by a distributed algorithm by the nodes
themselves.  Coordination between the nodes is achieved by exchanging
messages. Because of the resource constraints, these distributed
algorithms need to be as simple and efficient as possible.  The
messages transmitted and received by each node should be as few and as
short as possible. Note that because of interference, the bandwidth of
each local region is extremely limited. Typically, for a node $v$, the
time needed to even receive a single message from all neighbors is
proportional to the degree of $v$ (see e.g.\ \cite{radiocoloring}). As
long as the information provided to each node is symmetric, it is
clear that every node needs to know the IDs of all adjacent nodes in
$G$ in order to compute a reasonably good multicoloring of $G$. Hence,
the one-shot multicoloring algorithms considered in this paper base
their computations on the minimum information needed to compute a
non-trivial solution to the problem. Based on the above observations,
even learning the IDs of all neighbors requires quite a bit of time
and resources. Hence, acquiring significantly more information might
already render an algorithm inapplicable in practice.\footnote{It
  seems that in order to achieve a significant improvement on the
  multicolorings computed by the algorithms presented in this paper,
  every node would need much more information. Even if every node
  knows its complete $O(\log\Delta)$-neighborhood, the best
  deterministic coloring algorithm that we are aware of needs
  $\Theta(\Delta^2)$ colors.}

As a result of the scarcity of resources, the size and simplicity of
the wireless devices used in sensor networks, and the dependency of
the characteristic of radio transmissions on environmental conditions,
ad hoc and sensor networks are much less stable than usual wired
networks. As a consequence, the topology of these networks (and of
their interference graph) can be highly dynamic. This is especially
true for ad hoc networks, where it is often even assumed that the
nodes are mobile and thus can move in space. In order to adapt to such
dynamic conditions, a multicoloring needs to be recomputed
periodically.  This makes the resource and time efficiency of the used
algorithms even more important. This is particularly true for the
locality of the algorithms. If the computation of every node only
depends on the topology of a close-by neighborhood, dynamic changes
also only affect near-by nodes.

The remainder of the paper is organized as follows. In Section
\ref{sec:relwork}, we discuss related work. The problem is formally
defined in Section \ref{sec:problem}. We present the deterministic and
randomized upper bounds in Section \ref{sec:upperbounds} and the lower
bounds in Section \ref{sec:lowerbounds}.

\vspace*{-2mm}
\section{Related Work}
\label{sec:relwork}

There is a rich literature on distributed algorithms to compute
classical vertex colorings (see e.g.
\cite{awerbuch89,cole86,goldberg88,oneround,linial92,panconesi95}).
The paper most related to the present one is \cite{oneround}. In
\cite{oneround}, deterministic algorithms for the standard coloring
problem in the same distributed setting are studied (i.e., every node
has to compute its color based on its ID and the IDs of its
neighbors). The main result is a $\Omega(\Delta^2/\log^2\Delta)$ lower
bound on the number of colors.  
The first paper to study distributed coloring is a seminal paper by
Linial \cite{linial92}. The main result of \cite{linial92} is an
$\Omega(\log^*n)$-time lower bound for coloring a ring with a constant
number of colors. As a corollary of this lower bound, one obtains an
$\Omega(\log\log N)$ lower bound on the number of colors for
deterministic one-shot coloring algorithms as studied in this paper.
Linial also looks at distributed coloring algorithms for general graph
and shows that one can compute an $\Oh(\Delta^2)$-coloring in time
$\Oh(\log^*n)$. In order to color a general graph with less colors,
the best known distributed algorithms are significantly
slower.\footnote{In \cite{demarco01}, it is claimed that an
  $\Oh(\Delta)$ coloring can be computed in time
  $\Oh(\log^*(n/\Delta))$. However, the argumentation in
  \cite{demarco01} has a fundamental flaw that cannot be fixed
  \cite{pelcpersonal}.} Using randomization, an $\Oh(\Delta)$-coloring
can be obtained in time $\Oh(\sqrt{\log n})$ \cite{kothapalli06}.
Further, the fastest algorithm to obtain a $(\Delta+1)$-coloring is
based on an algorithm to compute a maximal independent set by Luby
\cite{luby86} and on a reduction described in \cite{linial92} and has
time complexity $\Oh(\log n)$. The best known deterministic algorithms
to compute a $(\Delta+1)$-coloring have time complexities
$2^{\Oh(\sqrt{\log n})}$ and $\Oh(\Delta\log\Delta+\log^*n)$ and are
described in \cite{panconesi95} and \cite{oneround}, respectively.
For special graph classes, there are more efficient deterministic
algorithms. It has long been known that in rings \cite{cole86} and
bounded degree graphs \cite{goldberg88,linial92}, a
$(\Delta+1)$-coloring can be computed in time $\Oh(\log^*n)$. Very
recently, it has been shown that this also holds for the much larger
class of graphs with bounded local independent sets
\cite{schneider08}. In particular, this graph class contains all graph
classes that are typically used to model wireless ad hoc and sensor
networks. Another recent result shows that graphs of bounded
arboricity can be colored with a constant number of colors in time
$\Oh(\log n)$ \cite{barenboim08}.

Closely related to vertex coloring algorithms are distributed
algorithms to compute edge colorings
\cite{czygrinow01,grable97,panconesi97}. In a seminal paper, Naor and
Stockmeyer were the first to look at distributed algorithms where all
nodes have to base their decisions on constant neighborhoods
\cite{naor93}. It is shown that a weak coloring with $f(\Delta)$
colors (every node needs to have a neighbor with a different color)
can be computed in time $2$ if every vertex has an odd degree. Another
interesting approach is taken in \cite{advice} where the complexity of
distributed coloring is studied in case there is an oracle that gives
some nodes a few bits of extra information.

There are many papers that propose to use some graph coloring variant
in order to compute TDMA schedules and FDMA frequency or CDMA code
assignments (see e.g.\
\cite{telecommunications,gandham05,herman04,mecke07,ramanathan99,rhee06,zhang08}).
Many of these papers compute a vertex coloring of the network graph
such that nodes at distance at most $2$ have different colors. This
guarantees that no two neighbors of a node use the same time slot,
frequency, or code. Some of the papers also propose to construct a
TDMA schedule by computing an edge coloring and using different time
slots for different edges. Clearly, it is straight-forward to use our
algorithms for edge colorings, i.e., to compute a multicoloring of the
line graph. With the exception of \cite{herman04} all these papers
compute a coloring and assign only one time slot, frequency, or code
to every node or edge. In \cite{herman04}, first, a standard coloring
is computed. Based on this coloring, an improved slot assignment is
constructed such that in the end, the number of slots assigned to a
node is inversely proportional to the number of colors in its
neighborhood.  


\vspace*{-3mm}
\section{Formal Problem Description}
\label{sec:problem}

\vspace*{-3mm}
\subsection{Mathematical Preliminaries}
\label{sec:prelim}

Throughout the paper, we use $\log(\cdot)$ to denote logarithms to
base $2$ and $\ln(\cdot)$ to denote natural logarithms, respectively.
By $\log^{(i)}x$ and by $\ln^{(i)}x$, we denote the $i$-fold
applications of the logarithm functions $\log$ and $\ln$ to $x$,
respectively\footnote{We have $\log^{(0)}x=\ln^{(0)}x=x$,
  $\log^{(i+1)}x=\log(\log^{(i)}x)$, and
  $\ln^{(i+1)}x=\ln(\ln^{(i)}x)$. Note that we also use $\log^ix=(\log
  x)^i$ and $\ln^ix=(\ln x)^i$}. The log star function is defined as
$\log^*n:=\min_i \{\log^{(i)}n\le1\}$. We also use the following
standard notations. For an integer $n\ge1$, $[n]=\{1,\ldots,n\}$. For
a finite set $\Omega$ and an integer $k\in\{0,\ldots,|\Omega|\}$,
${\Omega\choose k}=\{S\in2^{\Omega}:|S|=k\}$. The term with high
probability (w.h.p.) means with probability at least $1-1/n^c$ for a
constant $c\ge1$.

\subsection{Multicoloring}

The multicoloring problem that was introduced in Section
\ref{sec:intro} can be formally defined as follows.

\begin{definition}[Multicoloring]
  An $(\rho(\delta),k)$-multicoloring $\gamma$ of a graph $G=(V,E)$ is
  a mapping $\gamma:V\to2^{[k]}$ that assigns a set
  $\gamma(v)\subset[k]$ of colors to each node $v$ of $G$ such that
  $\forall\{u,v\}\in E:\gamma(u)\cap\gamma(v)=\emptyset$ and such that
  for every node $v\in V$ of degree $\delta$,
  $|\gamma(v)|/k\ge\rho(\delta)/(\delta+1)$.
\end{definition}

We call $\rho(\delta)$ the \emph{approximation ratio} of a
$(\rho(\delta),k)$-multicoloring.  Because in a one-shot algorithm
(cf.\ the next section for a formal definition), a node of degree
$\delta$ cannot distinguish $G$ from $K_{\delta+1}$, the approximation
ratio of every one-shot algorithm needs to be at most $1$.

The multicoloring problem is related to the fractional coloring
problem in the following way. Assume that every node is assigned the
same number $c$ of colors and that the total number of colors is $k$.
Taking every color with fraction $1/c$ then leads to a fractional
$(k/c)$-coloring of $G$. Hence, in this case, $k/c$ is lower bounded
by the fractional chromatic number $\chi_f(G)$ of $G$. 

\subsection{One-Shot Algorithms}
\label{sec:model}

As outlined in the introduction, we are interested in local algorithms
to compute multicolorings of an $n$-node graph $G=(V,E)$.  For a
parameter $N\ge n$, we assume that every node $v$ has a unique ID
$x_v\in[N]$. In deterministic algorithms, every node has to compute a
color set based on its own ID as well as the IDs of its neighbors.
For randomized algorithms, we assume that nodes also know the random
bits of their neighbors. Formally, a one-shot algorithm can be
defined as follows.

\begin{definition}[One-Shot Algorithm] We call a distributed algorithm
  a one-shot algorithm if every node $v$ performs (a subset of) the
  following three steps:\\
  \hspace*{3mm}1. Generate sequence $R_v$ of random bits
  (deterministic algorithms: $R_v=\emptyset$)\\
  \hspace*{3mm}2. Send $x_v,R_v$ to all neighbors\\
  \hspace*{3mm}3. Compute solution based on $x_v$, $R_v$, and the
  received information
\end{definition}

Assume that $G$ is a network graph such that two nodes $u$ and $v$ can
directly communicate with each other iff they are connected by an edge
in $G$. In the standard \emph{synchronous message passing} model, time
is divided into rounds and in every round, every node of $G$ can send
a message to each of its neighbors. One-shot algorithms then exactly
correspond to computations that can be carried out in a single
communication round.


For deterministic one-shot algorithms, the output of every node $v$
is a function of $v$'s ID $x_v$ and the IDs of $v$'s neighbors. We
call this information on which $v$ bases its decisions, the
\emph{one-hop view} of $v$.

\begin{definition}[One-Hop View]
  Consider a node $v$ with ID $x_v$ and let $\Gamma_v$ be the set of
  IDs of the neighbors of $v$. We call the pair $(x_v,\Gamma_v)$ the
  one-hop view of $v$.
\end{definition}

Let $(x_u,\Gamma_u)$ and $(x_v,\Gamma_v)$ be the one-hop views of two
adjacent nodes. Because $u$ and $v$ are neighbors, we have
$x_u\in\Gamma_v$ and that $x_v\in\Gamma_u$. It is also not hard to see
that
\begin{equation}\label{eq:ngraphedge}
  \forall x_u,x_v\in[N]\text{ and }
  \forall \Gamma_u,\Gamma_v\in2^{[N]}\text{ such that }
  x_u\not=x_v, x_u\in\Gamma_v\setminus\Gamma_u, 
  x_v\in\Gamma_u\setminus\Gamma_v,
\end{equation}
there is a labeled graph that has two adjacent nodes $u$ and $v$ with
one-hop views $(x_u,\Gamma_u)$ and $(x_v,\Gamma_v)$, respectively.
Assume that we are given a graph with maximum degree $\Delta$ (i.e.,
for all one-hop views $(x_v,\Gamma_v)$, we have
$|\Gamma_v|\le\Delta$).  A one-shot vertex coloring algorithm maps
every possible one-hop view to a color. A correct coloring algorithm
must assign different colors to two one-hop views $(x_u,\Gamma_u)$ and
$(x_v,\Gamma_v)$ iff they satisfy Condition \eqref{eq:ngraphedge}.
This leads to the definition of the \emph{neighborhood graph}
$\mathcal{N}_1(N,\Delta)$ \cite{oneround} (the general notion of
neighborhood graphs has been introduced in \cite{linial92}). The nodes
of $\mathcal{N}_1(N,\Delta)$ are all one-hop views $(x_v,\Gamma_v)$
with $|\Gamma_v|\le\Delta$. There is an edge between $(x_u,\Gamma_u)$
and $(x_v,\Gamma_v)$ iff the one-hop views satisfy Condition
\eqref{eq:ngraphedge}. Hence, a one-shot coloring algorithm must
assign different colors to two one-hop views iff they are neighbors in
$\mathcal{N}_1(N,\Delta)$. The number of colors that are needed to
properly color graphs with maximum degree $\Delta$ by a one-shot
algorithm therefore exactly equals the chromatic number
$\chi\big(\mathcal{N}_1(N,\Delta)\big)$ of the neighborhood graph (see
\cite{oneround,linial92} for more details). Similarly, a one-shot
$(\rho(\delta),k)$-multicoloring algorithm corresponds to a
$(\rho(\delta),k)$-multicoloring of the neighborhood graph.

\section{Upper Bounds}
\label{sec:upperbounds}

In this section, we prove all the upper bounds claimed in Section
\ref{sec:intro}. We first prove that an efficient deterministic
one-shot multicoloring algorithm exists in Section
\ref{sec:existence}. Based on similar ideas, we derive an almost
optimal randomized algorithm in Section \ref{sec:randalg}. Finally, in
Section \ref{sec:constructive}, we introduce constructive methods to
obtain one-shot multicoloring algorithms. For all algorithms, we
assume that the nodes know the size of the ID space $N$ as well as
$\Delta$, an upper bound on the largest degree in the network. It
certainly makes sense that nodes are aware of the used ID space. Note
that it is straight-forward to see that there cannot be a non-trivial
solution to the one-shot multicoloring problem if the nodes do not
have an upper bound on the maximum degree in the network.

\subsection{Existence of an Efficient Deterministic Algorithm}
\label{sec:existence}


The existence of an efficient, deterministic one-shot multicoloring
algorithm is established by the following theorem.

\begin{theorem}\label{thm:probabilistic}
  Assume that we are given a graph with maximum degree $\Delta$ and
  node IDs in $[N]$. Then, for all $0<\eps\le1$, there is a
  deterministic, one-shot
  $\left(1-\eps,\Oh(\Delta^2\log(N)/\eps^2)\right)$-multicoloring
  algorithm.
\end{theorem}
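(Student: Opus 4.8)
The plan is to use the probabilistic method on the neighborhood graph $\mathcal{N}_1(N,\Delta)$, following the correspondence established in Section~\ref{sec:model}: a one-shot $(\rho,k)$-multicoloring algorithm is exactly a $(\rho,k)$-multicoloring of $\mathcal{N}_1(N,\Delta)$. Set $k=c\Delta^2\ln(N)/\eps^2$ for a suitable constant $c$, and independently for each color $j\in[k]$ and each potential neighbor ID $x\in[N]$, flip a fair-ish coin to decide whether color $j$ is ``forbidden'' by ID $x$; concretely, assign each color $j$ a uniformly random ``weight'' or, more directly, let each ID $x\in[N]$ pick a random subset $F_x\subseteq[k]$ where each color is included independently with probability roughly $1/\Delta$ (tuned by $\eps$). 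A node with one-hop view $(x_v,\Gamma_v)$ then outputs $S_v = [k]\setminus\bigcup_{x\in\Gamma_v\cup\{x_v\}} F_x$ — actually one must be slightly more careful so that a node's own forbidden set does not shrink $S_v$; the cleanest version is to let each ID $x$ own a random set $F_x$, and a node $v$ takes $S_v = F_{x_v}\setminus\bigcup_{x\in\Gamma_v}F_x$. Then adjacent nodes $u,v$ automatically get disjoint sets since $x_u\in\Gamma_v$ forces $S_v\cap F_{x_u}=\emptyset$ while $S_u\subseteq F_{x_u}$.

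First I would fix the inclusion probability $p$ so that $\mathbb{E}[|S_v|]$ for a node of degree $\delta$ is $p(1-p)^{\delta}k$; choosing $p=1/(\delta+1)$ would be natural but $p$ must be the \emph{same} for all nodes, so instead take $p=\eps'/\Delta$ or, better, observe that we want the bound $|S_v|/k\ge(1-\eps)/(\delta_v+1)$ simultaneously for every degree. The trick is to let each ID choose, for each color, a random real in $[0,1]$ (a random priority), and a node $v$ claims color $j$ iff $x_v$ has the smallest priority on color $j$ among $\{x_v\}\cup\Gamma_v$. This makes disjointness automatic, and for a node of degree $\delta$ each color is claimed with probability exactly $1/(\delta+1)$, so $\mathbb{E}[|S_v|]=k/(\delta_v+1)$. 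Since the color events are independent across colors, a Chernoff bound gives $\Pr[|S_v|<(1-\eps)k/(\delta_v+1)]\le \exp(-\eps^2 k/(2(\delta_v+1)))\le\exp(-\eps^2 k/(2\Delta))$.

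Next I would do a union bound over all nodes of $\mathcal{N}_1(N,\Delta)$. The number of one-hop views is at most $N\cdot\binom{N}{\le\Delta}\le N^{\Delta+1}$, so $\ln(\text{number of views})=\Oh(\Delta\ln N)$. Requiring $\exp(-\eps^2 k/(2\Delta))\cdot N^{\Delta+1}<1$ forces $k=\Omega(\Delta^2\ln(N)/\eps^2)$, which matches the claimed bound $k=\Oh(\Delta^2\log(N)/\eps^2)$. Hence with positive probability every node simultaneously gets $|S_v|\ge(1-\eps)k/(\delta_v+1)$ and all constraints are satisfied, so a good fixed assignment of priorities exists; fixing it yields the deterministic one-shot algorithm (the priorities become hard-coded constants known to all nodes, which is legitimate since nodes know $N$ and $\Delta$).

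The main obstacle is getting the approximation ratio right \emph{uniformly over all degrees $\delta\le\Delta$ at once} with a single random object — this is exactly what the minimum-priority construction buys us, since there the per-color success probability is $1/(\delta+1)$ on the nose regardless of $\delta$, so the degree-$\delta$ nodes need only $\Theta(\delta\log(\text{view count})/\eps^2)=\Oh(\delta\Delta\log N/\eps^2)\le\Oh(\Delta^2\log N/\eps^2)$ colors and low-degree nodes are even easier. A secondary technical point is the independence needed for Chernoff: the events ``$v$ claims color $j$'' for fixed $v$ and varying $j$ are independent because the priorities are drawn independently per (ID, color) pair, so this goes through cleanly. The only remaining care is to state the union-bound set correctly as the vertex set of $\mathcal{N}_1(N,\Delta)$ rather than just the $n$ nodes of $G$, since the algorithm must work for \emph{every} graph with the given parameters.
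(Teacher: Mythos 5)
Your proposal is correct and follows essentially the same route as the paper: the per-color random priorities with ``minimum wins'' are exactly the paper's $k$ independent random global orders $\prec_i$ on $[N]$, the per-color success probability is $1/(\delta+1)$, and the Chernoff bound plus a union bound over the at most $N^{\Delta+1}$ one-hop views of $\mathcal{N}_1(N,\Delta)$ yields $k=\Oh(\Delta^2\log(N)/\eps^2)$ and existence via the probabilistic method. The initial ``forbidden set'' variant you sketch is unnecessary; your final minimum-priority construction is the one the paper uses.
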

\begin{proof}
  We use permutations to construct colors as described in
  \cite{oneround}.  For $i=1,\dots,k$, let $\prec_i$ be a global order
  on the ID set $[N]$.  A node $v$ with $1$-hop view $(x_v,\Gamma_v)$
  includes color $i$ in its color set iff $\forall y\in\Gamma_v:
  x_v\prec_i y$. It is clear that with this approach the color sets of
  adjacent nodes are disjoint. In order to show that nodes of degree
  $\delta$ obtain a $\rho/(\delta+1)$-fraction of all colors, we need
  to show that for all $\delta\in[\Delta]$, all $x\in[N]$, and all
  $\Gamma\in{[N]\setminus\{x\}\choose\delta}$, for all $y\in\Gamma$,
  $x\prec_i y$ for at least $k\rho/(\delta+1)$ global orders
  $\prec_i$. We use the probabilistic method to show that a set of
  size $k=2(\Delta+1)^2\ln(N)/\eps^2$ of global orders $\prec_i$
  exists such that every node of degree $\delta\in[\Delta]$ gets at
  least an $(1-\eps)/(\delta+1)$-fraction of the $k$ colors. Such a
  set implies that there exists an algorithm that satisfies the
  claimed bounds for all graphs with maximum degree $\Delta$ and IDs
  in $[N]$.

  Let $\prec_1,\dots,\prec_k$ be $k$ global orders chosen
  independently and uniformly at random. The probability that a node
  $v$ with degree $\delta$ and $1$-hop view $(x_v,\Gamma_v)$ gets
  color $i$ is $1/(\delta+1)$ (note that $|\Gamma_v|=\delta$). Let
  $X_v$ be the number of colors that $v$ gets. We have
  $\expect[X_v]=k/(\delta+1)\ge k/(\Delta+1)$. Using a Chernoff bound,
  we then obtain
  \begin{equation}\label{eq:existencechernoff}
  \pr\left[X_v<(1-\eps)\cdot\frac{k}{\delta+1}\right] =
  \pr\left[X_v<(1-\eps)\cdot\expect[X_v]\right] <
  e^{-\eps^2\expect[X_v]/2}\le
   \frac{1}{N^{\Delta+1}}.
  \end{equation}
  The total number of different possible one-hop views can be bounded as
  \(
  |\mathcal{N}_1(N,\Delta)| = 
  N\cdot\sum_{\delta=1}^\Delta{N-1\choose\delta} <
  N^{\Delta+1}.
  \)
  By a union bound argument, we therefore get that with positive
  probability, for all $\delta\in[\Delta]$, all possible one-hop
  views $(x_v,\Gamma_v)$ with $|\Gamma_v|=\delta$ get at least
  $(1-\eps)\cdot k/(\delta+1)$ colors. Hence, there exists a set of
  $k$ global orders on the ID set $[N]$ such that all one-hop views
  obtain at least the required number of colors.
\end{proof}

\noindent\textbf{Remark:}
Note that if we increase the number of permutations (i.e., the number
of colors) by a constant factor, all possible one-hop views $(x,\Gamma)$
with $|\Gamma|=\delta$ get a $(1-\eps)/(\delta+1)$-fraction of all
colors w.h.p.

\subsection{Randomized Algorithms}
\label{sec:randalg}

We will now show that with the use of randomization, the upper bound
of Section \ref{sec:existence} can be significantly improved if the
algorithm only needs to be correct w.h.p. We will again use random
permutations.  The problem of the deterministic algorithm is that the
algorithm needs to assign a large set of colors to all roughly
$N^\Delta$ possible one-hop views. With the use of randomization, we
essentially only have to assign colors to $n$ randomly chosen one-hop
views.

For simplicity, we assume that every node knows the number of nodes
$n$ (knowing an upper bound on $n$ is sufficient). For an integer
parameter $k>0$, every $v\in V$ chooses $k$ independent random numbers
$x_{v,1},\dots,x_{v,k}\in[kn^4]$ and sends these random numbers to all
neighbors. We use these random numbers to induce $k$ random
permutations on the nodes. Let $\Gamma(v)$ be the set of neighbors of
a node $v$. A node $v$ selects all colors $i$ for which
$x_{v,i}<x_{u,i}$ for all $u\in\Gamma(v)$.

\begin{theorem}\label{thm:randalg}
  Choosing $k=6(\Delta+1)\ln(n)/\eps^2$ leads to a randomized one-shot
  algorithm that computes a $(1-\eps,k)$-multicoloring w.h.p.
\end{theorem}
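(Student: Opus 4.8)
The plan is to mirror the structure of the proof of Theorem \ref{thm:probabilistic}, but to apply the union bound over the $n$ actual nodes of $G$ rather than over all $N^{\Delta+1}$ possible one-hop views. First I would fix a node $v$ of degree $\delta\le\Delta$ and analyze the random variable $X_v$ counting the colors $v$ selects. For a fixed index $i$, node $v$ gets color $i$ iff $x_{v,i}$ is strictly smaller than $x_{u,i}$ for all $u\in\Gamma(v)$; since the $x_{\cdot,i}$ are i.i.d.\ uniform on $[kn^4]$, the probability that $v$ is the unique minimum among the $\delta+1$ values $\{x_{v,i}\}\cup\{x_{u,i}:u\in\Gamma(v)\}$ is essentially $1/(\delta+1)$ — I would need to account for the (tiny) probability of ties, which is at most $\binom{\delta+1}{2}/(kn^4)$, so the per-color success probability is at least $1/(\delta+1) - O(1/(kn^4))$, negligibly perturbing the analysis. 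Since the $k$ permutations are independent, $X_v$ is a sum of $k$ independent (nearly) Bernoulli$(1/(\delta+1))$ indicators, so $\expect[X_v]\ge (1-o(1))\,k/(\delta+1)\ge (1-o(1))\,k/(\Delta+1)$.

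Next I would apply a Chernoff bound exactly as in \eqref{eq:existencechernoff}: with $k=6(\Delta+1)\ln(n)/\eps^2$ we get $\expect[X_v]\gtrsim 6\ln(n)/\eps^2$, hence
\[
\pr\!\left[X_v < (1-\eps)\cdot\frac{k}{\delta+1}\right] < e^{-\eps^2\expect[X_v]/2} \le e^{-(3-o(1))\ln n} \le \frac{1}{n^2}
\]
for $n$ large enough (the constant $6$ is chosen with exactly this slack, absorbing the $o(1)$ loss from ties and from $\delta\le\Delta$). The disjointness of color sets of adjacent nodes is immediate and deterministic: if $\{u,v\}\in E$ then $x_{u,i}$ and $x_{v,i}$ cannot both be the strict minimum over $\{x_{w,i}:w\in\{v\}\cup\Gamma(v)\}$ and over $\{x_{w,i}:w\in\{u\}\cup\Gamma(u)\}$ respectively, so no color is selected by both. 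Finally I would take a union bound over all $n$ nodes: with probability at least $1 - n\cdot n^{-2} = 1 - 1/n$, every node $v$ of degree $\delta$ receives at least $(1-\eps)k/(\delta+1)$ colors, which is exactly a $(1-\eps,k)$-multicoloring, and this holds w.h.p.

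I do not expect a serious obstacle here; the only delicate point is handling ties among the random values — this is why the range $[kn^4]$ is chosen so generously, making the tie probability polynomially negligible so it disappears into the constant. One should double-check that the Chernoff exponent still yields a $1/n^c$ bound with $c\ge 1$ after the union bound over $n$ nodes; the factor $6$ (versus the factor $2$ in Theorem \ref{thm:probabilistic}, where the union bound was over $N^{\Delta+1}$ views and the exponent needed an extra $\Delta+1$ factor) is precisely what makes this work, since now we only need the failure probability to beat $1/n^{c+1}$.
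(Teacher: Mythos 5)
Your proposal is correct and follows exactly the intended argument for the algorithm described before the theorem (the paper omits the proof in this extended abstract): per-color selection probability $1/(\delta+1)$ up to a polynomially negligible tie probability thanks to the range $[kn^4]$, a Chernoff bound as in \eqref{eq:existencechernoff}, and a union bound over the $n$ actual nodes instead of all $N^{\Delta+1}$ one-hop views, which is precisely where the randomized algorithm saves the factor $\Delta\log N/\log n$. Your accounting of ties, of the independence of the $k$ coordinates, and of the deterministic disjointness of adjacent color sets is all sound.
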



\noindent\textbf{Remark:}
In the above algorithm, every node has to generate
$\Oh(\Delta\log^2(n)/\eps^2)$ random bits and send these bits to the
neighbors. Using a (non-trivial) probabilistic argument, it is
possible to show that the same result can be achieved using only
$\Oh(\log n)$ random bits per node.


\subsection{Explicit Algorithms}
\label{sec:constructive}

We have shown in Section \ref{sec:existence} that there is a
deterministic one-shot algorithm that almost matches the lower bound
(cf.\ Theorem \ref{thm:detLB}). Unfortunately, the techniques of
Section \ref{sec:existence} do not yield an explicit algorithm. In
this section, we will present constructive methods to obtain a
one-shot multicoloring algorithm.

\begin{algorithm}[t]
  \caption{Explicit Deterministic Multicoloring Algorithm: 
    Basic Construction}
  \label{alg:constructivebasic}
  \begin{algorithmic}[1]
  \item[\textbf{Input:}] one-hop view $(x,\Gamma)$, parameter $\ell\ge0$
  \item[\textbf{Output:}] set $S$ of colors, initially $S=\emptyset$
    \FORALL{$(\alpha_0,\alpha_1,\ldots,\alpha_\ell)
      \in\field{q_0}\times\field{q_1}\times\cdots\times\field{q_\ell}$}
    \STATE $\beta_{0,x} := \varphi_{0,x}(\alpha_0)$;
    $\forall y\in\Gamma: \beta_{0,y} := \varphi_{0,y}(\alpha_0)$
    \FOR{$i:=1$ \textbf{to} $\ell$}
    \STATE $\beta_{i,x} := \varphi_{i,\beta_{i-1,x}}(\alpha_i)$;
    $\forall y\in\Gamma:
    \beta_{i,y} := \varphi_{i,\beta_{i-1,y}}(\alpha_i)$
    \ENDFOR
    \ENDFOR
    \IF{$\forall y\in\Gamma: \beta_{\ell,x}\not=\beta_{\ell,y}$}
    \STATE $S:=S\cup(\alpha_0,\alpha_1,\ldots,\alpha_\ell,\beta_{\ell,x})$
    \ENDIF
  \end{algorithmic}
\end{algorithm}

We develop the algorithm in two steps. First, we construct a
multicoloring where in the worst case, every node $v$ obtains the same
fraction of colors independent of $v$'s degree. We then show how to
increase the fraction of colors assigned to low-degree nodes. For an
integer parameter $\ell\ge0$, let $q_0,\ldots,q_\ell$ be prime powers
and let $d_0,\ldots,d_\ell$ be positive integers such that
$q_0^{d_0+1}\ge N$ and $q_i^{d_i+1}\ge q_{i-1}$ for $i\ge1$. For a
prime power $q$ and a positive integer $d$, let $\mathcal{P}(q,d)$ be
the set of all $q^{d+1}$ polynomials of degree at most $d$ in
$\field{q}[z]$, where $\field{q}$ is the finite field of order $q$. We
assume that that we are given an injection $\varphi_0$ from the ID set
$[N]$ to the polynomials in $\mathcal{P}(q_0,d_0)$ and injections
$\varphi_i$ from $\field{q_{i-1}}$ to $\mathcal{P}(q_i,d_i)$ for
$i\ge1$. For a value $x$ in the respective domain, let $\varphi_{i,x}$
be the polynomial assigned to $x$ by injection $\varphi_i$. The first
part of the algorithm is an adaptation of a technique used in a
coloring algorithm described in \cite{linial92} that is based on an
algebraic construction of \cite{erdos85}. There, a node $v$ with one-hop
view $(x,\Gamma)$ selects a color
$\big(\alpha,\varphi_{0,x}(\alpha)\big)$, where $\alpha\in\field{q_0}$
is a value for which $\varphi_{0,x}(\alpha)\not=\varphi_{0,y}(\alpha)$
for all $y\in\Gamma$ (we have to set $q_0$ and $d_0$ such that this is
always possible). We make two modifications to this basic algorithm.
Instead of only selecting one value $\alpha\in\field{q_0}$ such that
$\forall y\in\Gamma:\varphi_{0,x}(\alpha)\not=\varphi_{0,y}(\alpha)$,
we select all values $\alpha$ for which this is true. We then use
these values recursively (as if $\varphi_{i,x}(\alpha_i)$ was the ID
of $v$) $\ell$ times to reduce the dependence of the approximation
ratio of the coloring on $N$. The details of the first step of the
algorithm are given by Algorithm \ref{alg:constructivebasic}.

\begin{lemma}\label{lemma:basicalganalysis}
  Assume that for $0\le i\le\ell$, $q_i\ge f_i\Delta d_i$ where
  $f_i>1$. Then, Algorithm \ref{alg:constructivebasic} constructs a
  multicoloring with $q_\ell\cdot\prod_{i=0}^\ell q_i$ colors where
  every node at least receives a $\lambda/q_\ell$-fraction of all
  colors where $\lambda=\prod_{i=0}^\ell (1-1/f_i)$.
\end{lemma}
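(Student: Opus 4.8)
The plan is to fix an arbitrary node $v$ with one-hop view $(x,\Gamma)$, where $\delta:=|\Gamma|\le\Delta$, and to count exactly how many colors Algorithm \ref{alg:constructivebasic} places in $v$'s set $S$. The colors range over $\field{q_0}\times\cdots\times\field{q_\ell}\times\field{q_\ell}$, so there are $q_\ell\cdot\prod_{i=0}^\ell q_i$ colors in total, which is the claimed number. For an ID $z$ and a tuple $\bar\alpha=(\alpha_0,\dots,\alpha_\ell)$, the iterated evaluation $\beta_{i,z}$ is a function of $z$ and $(\alpha_0,\dots,\alpha_i)$ only, since $\beta_{0,z}=\varphi_{0,z}(\alpha_0)$ and $\beta_{i,z}=\varphi_{i,\beta_{i-1,z}}(\alpha_i)$. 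This yields correctness at once: if adjacent $u,v$ both selected a color $(\alpha_0,\dots,\alpha_\ell,\beta)$, then $\beta_{\ell,x_u}=\beta=\beta_{\ell,x_v}$, since $\beta_{\ell,\cdot}$ depends only on the ID and the tuple, contradicting the selection rule at $u$, which requires $\beta_{\ell,x_u}\ne\beta_{\ell,y}$ for all $y\in\Gamma_u$ (and $x_v\in\Gamma_u$ because $u,v$ are adjacent). It also shows that $S$ is in bijection with the set of \emph{good tuples} $\bar\alpha$, namely those with $\beta_{\ell,x}\ne\beta_{\ell,y}$ for all $y\in\Gamma$ (distinct tuples yield distinct colors). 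Hence it suffices to exhibit at least $\lambda\prod_{i=0}^\ell q_i$ good tuples.

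The key observation --- the one that yields a \emph{product} $\prod_i(1-1/f_i)$ rather than only the weaker $1-\sum_i 1/f_i$ that a single union bound would give --- is that $\bar\alpha$ is good if and only if $\beta_{i,x}\ne\beta_{i,y}$ holds for \emph{every} level $i\in\{0,\dots,\ell\}$ and every $y\in\Gamma$. Indeed, the update $z\mapsto\varphi_{i,\beta_{i-1,z}}(\alpha_i)$ is deterministic in $\beta_{i-1,z}$, so as soon as $\beta_{i,x}=\beta_{i,y}$ at some level $i$, equality persists at all higher levels, in particular at $\ell$.

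Given this, I would count good tuples by an induction on the level that builds the tuple coordinate by coordinate. Fix a prefix $(\alpha_0,\dots,\alpha_{i-1})$ that is good through level $i-1$, i.e.\ $\beta_{i-1,x}\ne\beta_{i-1,y}$ for all $y\in\Gamma$ (for $i=0$ this is vacuous). For a fixed $y\in\Gamma$, the polynomials $\varphi_{i,\beta_{i-1,x}}$ and $\varphi_{i,\beta_{i-1,y}}$ are distinct --- because $\varphi_i$ is an injection and $\beta_{i-1,x}\ne\beta_{i-1,y}$, and for $i=0$ because $\varphi_0$ is an injection and the IDs satisfy $x\ne y$ --- and both have degree at most $d_i$. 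Their difference is thus a nonzero polynomial of degree at most $d_i$ over $\field{q_i}$ and has at most $d_i$ roots, so at most $d_i$ choices of $\alpha_i$ make $\beta_{i,x}=\beta_{i,y}$. A union bound over the at most $\Delta$ neighbors shows that at least $q_i-\Delta d_i\ge q_i(1-1/f_i)$ choices of $\alpha_i$ extend the prefix to one good through level $i$, using $q_i\ge f_i\Delta d_i$. Multiplying over $i=0,\dots,\ell$ gives at least $\prod_{i=0}^\ell q_i(1-1/f_i)=\lambda\prod_{i=0}^\ell q_i$ good tuples, hence a $\lambda\prod_{i=0}^\ell q_i\big/\big(q_\ell\prod_{i=0}^\ell q_i\big)=\lambda/q_\ell$ fraction of all colors, as claimed.

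I expect the structural equivalence in the second paragraph to be the real content; the rest is routine bookkeeping --- handling level $0$ (where the recursion is seeded by the ID and injectivity of $\varphi_0$ plays the role of injectivity of $\varphi_i$), checking that distinct good tuples yield distinct colors, and keeping the degree parameter $d_i$ and the field size $q_i$ straight throughout.
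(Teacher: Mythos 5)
Your proof is correct and follows essentially the same route as the paper: the equivalence of the selection condition with requiring $\beta_{i,x}\ne\beta_{i,y}$ at every level $i$, followed by a level-by-level count using the fact that two distinct polynomials of degree at most $d_i$ agree on at most $d_i$ points, a union bound over the at most $\Delta$ neighbors, and $q_i\ge f_i\Delta d_i$ to get the factor $1-1/f_i$ per level. Your explicit bijection/disjointness argument is just a slightly more detailed rendering of what the paper states directly from the condition in line 5.
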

\begin{proof}
  All colors that are added to the color set in line 6 are from
  $\field{q_0}\times\field{q_1} \times\cdots\times
  \field{q_{\ell}}\times\field{q_{\ell}}$.  It is therefore clear that
  the number of different colors is $q_\ell\cdot\prod_{i=0}^\ell q_i$
  as claimed. From the condition in line 5, it also follows that the
  color sets of adjacent nodes are disjoint.

  To determine the approximation ratio, we count the number of colors,
  a node $v$ with one-hop view $(x,\Gamma)$ gets. First note that the
  condition in line 5 of the algorithm implies that (and is therefore
  equivalent to demand that) $\beta_{i,x}\not=\beta_{i,y}$ for all
  $y\in\Gamma$ and for all $i\in\{0,\dots,\ell\}$ because
  $\beta_{i,x}=\beta_{i,y}$ implies $\beta_{j,x}=\beta_{j,y}$ for all
  $j\ge i$. We therefore need to count the number of
  $(\alpha_0,\ldots,\alpha_{\ell})\in\field{q_0} \times
  \cdots\times\field{q_{\ell}}$ for which
  $\beta_{i,x}\not=\beta_{i,y}$ for all $i\in\{0,\dots,\ell\}$ and all
  $y\in\Gamma$. We prove by induction on $i$ that for $i<\ell$, there
  are at least $\prod_{j=0}^i q_j\cdot(1-1/f_j)$ tuples
  $(\alpha_0,\ldots,\alpha_i)\in\field{q_0}\times\cdots\field{q_i}$
  with $\beta_{j,x}\not=\beta_{j,y}$ for all $j\le i$. Let us first
  prove the statement for $i=0$. Because the IDs of adjacent nodes are
  different, we know that $\varphi_{0,x}\not=\varphi_{0,y}$ for all
  $y\in\Gamma$. Two different degree $d_0$ polynomials can be equal at
  at most $d_0$ values. Hence, for every $y\in\Gamma$,
  $\varphi_{0,x}(\alpha)=\varphi_{0,y}(\alpha)$ for at most $d_0$
  values $\alpha$. Thus, since $|\Gamma|\le\Delta$, there are at least
  $q_0-\Delta d_0\ge q_0\cdot(1-1/f_0)$ values $\alpha$ for which
  $\varphi_{0,x}\not=\varphi_{0,y}$ for all $y\in\Gamma$. This
  establishes the statement for $i=0$. For $i>0$, the argument is
  analogous. Let $(\alpha_0,\ldots,\alpha_{i-1})\in\field{q_0}
  \times\cdots\times\field{q_{i-1}}$ be such that
  $\beta_{j,x}\not=\beta_{j,y}$ for all $y\in\Gamma$ and all $j<i$.
  Because $\beta_{i-1,x}\not=\beta_{i-1,y}$, we have
  $\varphi_{i,x}\not=\varphi_{i,y}$. Thus, with the same argument as
  for $i=0$, there are at least $q_i\cdot(1-1/f_i)$ values $\alpha_i$
  such that $\beta_{i,x}\not=\beta_{i,y}$ for all $y\in\Gamma$.
  Therefore, the number of colors in the color set of every node is at
  least $\prod_{i=0}^{\ell} q_i\cdot\big(1-1/f_i\big) =
  \lambda\cdot\prod_{i=0}^\ell q_i$.  This is a
  $(\lambda/q_\ell)$-fraction of all colors.
\end{proof}

The next lemma specifies how the values of $q_i$, $d_i$, and $f_i$ can
be chosen to obtain an efficient algorithm.

\begin{lemma}\label{lemma:basicalgparams}
  Let $\ell$ be such that $\ln^{(\ell)}N>\max\{e,\Delta\}$. For $0\le
  i\le\ell$, we can then choose $q_i$, $d_i$, and $f_i$ such that
  Algorithm \ref{alg:constructivebasic} computes a multicoloring with
  $\Oh(\ell\Delta)^{\ell+2}\cdot\log_\Delta
  N\cdot\log_\Delta\ln^{(\ell)}N$ colors and such that every node gets
  at least a $1 / \big(4e^{9/4} \Delta
  \big\lceil\log_\Delta\ln^{(\ell)}N\big\rceil \big)$-fraction of all
  colors.
\end{lemma}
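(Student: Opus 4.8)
\emph{Proof sketch (proposal).}
The plan is to feed a suitably \emph{balanced} chain of prime powers into Lemma~\ref{lemma:basicalganalysis}. Put $N_0:=N$ and think of level $i$ as compressing an ID space of size $N_i$ into $\field{q_i}$, with $N_{i+1}:=q_i$. For $0\le i\le\ell$ I would take $d_i$ to be the smallest positive integer with $(f_i\Delta d_i)^{d_i+1}\ge N_i$ and let $q_i$ be the smallest prime power that is at least $f_i\Delta d_i$; Bertrand's postulate gives $f_i\Delta d_i\le q_i<2f_i\Delta d_i$, so $q_i\ge f_i\Delta d_i$ (the hypothesis of Lemma~\ref{lemma:basicalganalysis}) and $q_i^{d_i+1}\ge(f_i\Delta d_i)^{d_i+1}\ge N_i$ (so that the injections $\varphi_i$ in Algorithm~\ref{alg:constructivebasic} exist, and $\varphi_0$ exists because $q_0^{d_0+1}\ge N$). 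I would choose $f_i$ of order $\Theta(\ell+1)$ for $i<\ell$ — concretely something like $f_i=\lceil\tfrac49(\ell+1)\rceil+1$ — and $f_\ell$ a suitable absolute constant; the first choice makes $\lambda=\prod_{i=0}^\ell(1-1/f_i)$ stay bounded below by an absolute constant (about $e^{-9/4}$) instead of decaying with $\ell$, and the second keeps $q_\ell=O(\Delta d_\ell)$, which is what governs the approximation ratio.

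With the parameters fixed, Lemma~\ref{lemma:basicalganalysis} hands us a multicoloring with $q_\ell\cdot\prod_{i=0}^\ell q_i$ colors in which every node gets a $\lambda/q_\ell$-fraction, so it remains (i) to bound $q_\ell\prod_{i=0}^\ell q_i$ and (ii) to bound $q_\ell$ from above (recall $\lambda=\Theta(1)$). For (ii) I would show that the $N_i$ contract like an iterated logarithm: the defining relations give $d_i+1\approx\ln N_i/\ln q_i$ and $q_i\approx f_i\Delta d_i$, hence $q_i\ln q_i=\Theta(f_i\Delta)\ln N_i$, i.e. $\ln N_{i+1}=\ln\ln N_i+\Theta(\log(\ell\Delta))$; unwinding this and using $\ln^{(\ell)}N>\max\{e,\Delta\}$ yields $N_\ell=q_{\ell-1}\le \ln^{(\ell)}N\cdot(\ell\Delta)^{O(1)}$, so that (using $f_\ell=\Theta(1)$) $d_\ell=O(\log_\Delta N_\ell)=O(\log_\Delta\ln^{(\ell)}N)$ and $q_\ell<2f_\ell\Delta d_\ell=O(\Delta\lceil\log_\Delta\ln^{(\ell)}N\rceil)$. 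Pushing the constants through the two relevant applications of Bertrand's postulate and the lower bound on $\lambda$ then produces exactly $\lambda/q_\ell\ge 1/\!\big(4e^{9/4}\Delta\lceil\log_\Delta\ln^{(\ell)}N\rceil\big)$.

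For (i) the point is that the minimality in the choice of $d_i$ forces $d_i=O(\ln N_i/\ln q_i)=O(\ln N_i/\ln N_{i+1})$, so $\prod_{i=0}^{\ell}d_i$ telescopes to $O(1)^{\ell}\cdot\ln N/\ln q_\ell$. Combining this with $q_i<2f_i\Delta d_i$ and $\prod_{i=0}^{\ell}f_i=\Theta(\ell)^{\ell+1}$,
\[
  q_\ell\cdot\prod_{i=0}^{\ell}q_i \;<\; q_\ell\cdot 2^{\ell+1}\Big(\prod_{i=0}^{\ell}f_i\Big)\Delta^{\ell+1}\prod_{i=0}^{\ell}d_i \;=\; \Theta(\ell\Delta)^{\ell+1}\cdot\frac{q_\ell\,\ln N}{\ln q_\ell}\,.
\]
Inserting the bound on $q_\ell$ from (ii) and using $\ln q_\ell\ge\ln\Delta$ turns the right-hand side into $\Oh(\ell\Delta)^{\ell+1}\cdot\Delta\lceil\log_\Delta\ln^{(\ell)}N\rceil\cdot\log_\Delta N\le\Oh(\ell\Delta)^{\ell+2}\log_\Delta N\,\log_\Delta\ln^{(\ell)}N$, as claimed; note the spare factor $\Theta(\ell)^{2}$ that is available to absorb lower-order slack.

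The step I expect to be the real work is the iterated-logarithm bookkeeping in (ii): one has to verify that the multiplicative ``drift'' accumulated from the $f_i\Delta$ factors and from rounding each $q_i$ up to a prime power really stays inside the $(\ell\Delta)^{O(1)}$ window claimed above, and to be careful in the degenerate regime in which $\ln^{(\ell)}N$ is only barely above $\max\{e,\Delta\}$ (so that $N_\ell$, $d_\ell$, $q_\ell$ are essentially $\Delta^{O(1)}$, $O(1)$, $\Theta(\Delta)$ and the recursion has ``bottomed out''). Once these estimates are in hand, everything else — the existence of the $\varphi_i$, the invocation of Lemma~\ref{lemma:basicalganalysis}, and the arithmetic with the constants — is routine.
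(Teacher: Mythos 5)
Your overall architecture is almost certainly the intended one (and the color-count half of it is fine): build the chain with $d_i$ minimal such that $(f_i\Delta d_i)^{d_i+1}\ge N_i$, take $q_i$ the least prime power above $f_i\Delta d_i$ via Bertrand, let the $N_i$ collapse like an iterated logarithm with additive drift $\Oh(\log(\ell\Delta))$ per level, and telescope $\prod_i d_i$ against $\log_\Delta N$. The gap is in step (ii), exactly at the place you flagged but did not resolve. From $N_\ell\le \ln^{(\ell)}N\cdot(\ell\Delta)^{\Oh(1)}$ you conclude $d_\ell=\Oh(\log_\Delta N_\ell)=\Oh(\log_\Delta\ln^{(\ell)}N)$, but $\log_\Delta\big((\ell\Delta)^{\Oh(1)}\big)=\Oh(\log\ell/\log\Delta)$ need not be $\Oh(\log_\Delta\ln^{(\ell)}N)$: when $\Delta$ and $\ln^{(\ell)}N$ are small (say $\Delta=2$, $\ln^{(\ell)}N$ barely above $e$, so $\lceil\log_\Delta\ln^{(\ell)}N\rceil=2$) while $\ell$ is large, this term dominates. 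Moreover the loss is real for your parameter choice, not an artifact of the estimate: with $f_{\ell-1}=\Theta(\ell)$ you force $N_\ell=q_{\ell-1}\ge f_{\ell-1}\Delta=\Omega(\ell\Delta)$, and then the last-level constraint $(f_\ell\Delta d_\ell)^{d_\ell+1}\ge N_\ell$ with $f_\ell=\Oh(1)$ forces $d_\ell=\Omega(\log\ell/\log\log\ell)$, hence $q_\ell=\Omega(\Delta\log\ell/\log\log\ell)$, and the fraction degrades by an unbounded factor below the claimed $1/\big(4e^{9/4}\Delta\lceil\log_\Delta\ln^{(\ell)}N\rceil\big)$. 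This regime is not vacuous; it is precisely the maximal-$\ell$ setting ($\ell=\Theta(\log^*N-\log^*\Delta)$, small $\Delta$) that the paper later relies on.

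The fix is to give up the uniform schedule $f_i=\Theta(\ell)$ and let the $f_i$ taper toward the leaf, e.g.\ $f_{\ell-j}\approx\min\{\Theta(\ell),\,2^{j+2}\}$: then $\sum_i 1/f_i$ is still bounded by a constant below $9/4$ (so $\lambda$ keeps its constant lower bound), $q_{\ell-1}=\Oh(\Delta d_{\ell-1})$ with an absolute constant so that $d_\ell\le\lceil\log_\Delta\ln^{(\ell)}N\rceil$ (possibly $+\Oh(1)$, which must be argued away or absorbed) goes through, and the extra factor $\prod_j 2^{j+2}=2^{\Oh(\log^2\ell)}$ in $\prod_i f_i$ is still within $\Oh(\ell\Delta)^{\ell+2}$, so your color count survives. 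Separately, your constant bookkeeping is too optimistic even in the benign regime: with $\prod_{i<\ell}(1-1/f_i)\ge e^{-9/4}$ only, multiplying by $(1-1/f_\ell)\le 1/2$ and by Bertrand's factor $2$ at level $\ell$ leaves you at $1/\big(8e^{9/4}\Delta\lceil\log_\Delta\ln^{(\ell)}N\rceil\big)$; you need to budget the $e^{9/4}$ across all $\ell+1$ factors together with the rounding at level $\ell$ (e.g.\ enforce $\prod_{i<\ell}(1-1/f_i)\ge 2e^{-9/4}$ by taking $f_i$ a bit larger), which is routine but must be done to hit the stated constant.
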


The number of colors that Algorithm \ref{alg:constructivebasic}
assigns to nodes with degree almost $\Delta$ is close to optimal even
for small values of $\ell$. If we choose
$\ell=\Theta(\log^*N-\log^*\Delta)$, nodes of degree $\Theta(\Delta)$
even receive at least a $(d/\Delta)$-fraction of all colors for some
constant $d$. Because the number of colors assigned to a node $v$ is
independent of $v$'s degree, however, the coloring of Algorithm
\ref{alg:constructivebasic} is far from optimal for low-degree nodes.
In the following, we show how to improve the algorithm in this
respect.

\begin{algorithm}[t]
  \caption{Explicit Deterministic Multicoloring Algorithm: Small
    Number of Colors}
  \label{alg:constructive1}
  \begin{algorithmic}[1]
  \item[\textbf{Input:}] one-hop view $(x,\Gamma)$,
    instances $\mathcal{A}_{2^i,N}$ for
    $i\in\big[\lceil\log\Delta\rceil\big]$ of Algorithm
    \ref{alg:constructivebasic}, parameter $\eps\in[0,1]$
  \item[\textbf{Output:}] set $S$ of colors, initially $S=\emptyset$
    \FORALL{$i\in\big[\lceil\log\Delta\rceil\big]$} 
    \STATE $\omega_i:=\left\lceil\left(\Delta/2^{i-1}\right)^\eps\cdot
    \big|\mathcal{C}_{2^{\lceil\log\Delta\rceil},N}\big|/
    \big|\mathcal{C}_{2^i,N}\big|\right\rceil$
    \ENDFOR
    \FORALL{$i\in\big\{\lceil\log|\Gamma|\rceil,\ldots,
      \lceil\log\Delta\rceil\big\}$}
    \FORALL{$c\in\mathcal{C}_{2^i,N}[x,\Gamma]$}
    \STATE \textbf{for all} $j\in[\omega_i]$ \textbf{do} 
    $S:=S\cup(c,i,j)$
    \ENDFOR
    \ENDFOR
  \end{algorithmic}
\end{algorithm}

Let $\mathcal{A}_{\Delta,N}$ be an instance of Algorithm
\ref{alg:constructivebasic} for nodes with degree at most $\Delta$ and
let $\mathcal{C}_{\Delta,N}$ be the color set of
$\mathcal{A}_{\Delta,N}$. Further, for a one-hop view $(x,\Gamma)$,
let $\mathcal{C}_{\Delta,N}[x,\Gamma]$ be the colors assigned to
$(x,\Gamma)$ by Algorithm $\mathcal{A}_{\Delta,N}$. We run instances
$\mathcal{A}_{2^i,N}$ for all $i\in\big[\lceil\log\Delta\rceil\big]$.
A node $v$ with degree $\delta$ chooses the colors of all instances
for which $2^i\ge\delta$.  In order to achieve the desired trade-offs,
we introduce an integer weight $\omega$ for each color $c$, i.e.,
instead of adding color $c$, we add colors $(1,c),\ldots,(\omega,c)$.
The details are given by Algorithm \ref{alg:constructive1}. The
properties of Algorithm \ref{alg:constructive1} are summarized by the
next theorem. The straight-forward proof is omitted.

\begin{theorem}\label{thm:explicitalg}
  Assume that in the instances of Algorithm
  \ref{alg:constructivebasic}, the parameter $\ell$ is chosen such
  that for all $\Delta$, $\mathcal{A}_{\Delta,N}$ assigns at least a
  $f(N)/\Delta$-fraction of the colors to every node. Then, for a
  parameter $\eps\in[0,1]$, Algorithm \ref{alg:constructive1} computes
  a $\big(\Omega(f(N)\eps/\delta^{\eps}),
  \Oh(|\mathcal{C}_{2\Delta,N}|\cdot\Delta^\eps/\eps)\big)$-multicoloring.
\end{theorem}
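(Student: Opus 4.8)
The plan is to analyze Algorithm~\ref{alg:constructive1} directly, tracking two quantities: the total number of colors it uses and the number of colors it assigns to a node of degree $\delta$. First I would establish disjointness of color sets for adjacent nodes. If $u,v$ are adjacent with degrees $\delta_u\le\delta_v$, then for every index $i$ with $2^i\ge\delta_v$ both nodes run instance $\mathcal{A}_{2^i,N}$, and by Lemma~\ref{lemma:basicalganalysis} the underlying basic-construction color sets $\mathcal{C}_{2^i,N}[x_u,\Gamma_u]$ and $\mathcal{C}_{2^i,N}[x_v,\Gamma_v]$ are disjoint; appending the tag $(i,j)$ preserves this, and colors carrying different $i$-tags are trivially distinct, so $S_u\cap S_v=\emptyset$. (One must note that $v$ with degree $\delta_v$ participates in instance $\mathcal{A}_{2^i,N}$ only when $2^i\ge\delta_v$, i.e.\ $i\ge\lceil\log|\Gamma_v|\rceil$, which is exactly the range in the outer loop of lines~4--8, and that $2^i\ge\delta_v\ge\delta_u$ implies $u$ appears in that instance too — for $u$ this is the range $i\ge\lceil\log|\Gamma_u|\rceil$, which contains $i\ge\lceil\log\delta_v\rceil$.)

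Next I would count the total number of colors. Color $(c,i,j)$ ranges over $c\in\mathcal{C}_{2^i,N}$, $i\in[\lceil\log\Delta\rceil]$, and $j\in[\omega_i]$, so the total is $\sum_{i=1}^{\lceil\log\Delta\rceil}\omega_i\,|\mathcal{C}_{2^i,N}|$. By the definition of $\omega_i$ in line~2, $\omega_i\,|\mathcal{C}_{2^i,N}|\le\big((\Delta/2^{i-1})^\eps+1\big)\,|\mathcal{C}_{2^{\lceil\log\Delta\rceil},N}|\le 2(\Delta/2^{i-1})^\eps\,|\mathcal{C}_{2\Delta,N}|$ (using $(\Delta/2^{i-1})^\eps\ge1$ and $|\mathcal{C}_{2^{\lceil\log\Delta\rceil},N}|\le|\mathcal{C}_{2\Delta,N}|$). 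Summing the geometric-type series $\sum_i (\Delta/2^{i-1})^\eps = \Delta^\eps\sum_i 2^{-(i-1)\eps}=O(\Delta^\eps/\eps)$ — here the $1/\eps$ arises because $1-2^{-\eps}=\Theta(\eps)$ — gives the claimed bound $O(|\mathcal{C}_{2\Delta,N}|\cdot\Delta^\eps/\eps)$.

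Then I would lower-bound the number of colors a node $v$ of degree $\delta$ receives. Let $i^*:=\lceil\log\delta\rceil$, the smallest index used by $v$. By hypothesis each instance $\mathcal{A}_{2^{i^*},N}$ assigns $v$ at least an $f(N)/2^{i^*}$-fraction of $|\mathcal{C}_{2^{i^*},N}|$, i.e.\ at least $f(N)|\mathcal{C}_{2^{i^*},N}|/2^{i^*}$ colors $c$, and each is replicated $\omega_{i^*}$ times, so $v$ gets at least $\omega_{i^*}\,f(N)|\mathcal{C}_{2^{i^*},N}|/2^{i^*}$ colors just from instance $i^*$. Plugging in $\omega_{i^*}\ge(\Delta/2^{i^*-1})^\eps\,|\mathcal{C}_{2^{\lceil\log\Delta\rceil},N}|/|\mathcal{C}_{2^{i^*},N}|$, this is at least $(\Delta/2^{i^*-1})^\eps\cdot f(N)\cdot|\mathcal{C}_{2^{\lceil\log\Delta\rceil},N}|/2^{i^*} \ge \Omega\big(f(N)\,\Delta^\eps\,|\mathcal{C}_{2\Delta,N}|/\delta^{1+\eps}\big)$, using $2^{i^*}=\Theta(\delta)$ and $|\mathcal{C}_{2^{\lceil\log\Delta\rceil},N}|=\Theta(|\mathcal{C}_{2\Delta,N}|)$. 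Dividing by the total color count $O(|\mathcal{C}_{2\Delta,N}|\cdot\Delta^\eps/\eps)$ yields a fraction of $\Omega(f(N)\eps/\delta^{1+\eps})$ of all colors, i.e.\ an $\Omega(f(N)\eps/\delta^{\eps})$-fraction of the $1/(\delta+1)$ ideal — matching the theorem.

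The main obstacle is purely bookkeeping: handling the ceilings in the definition of $\omega_i$ and in $i^*=\lceil\log\delta\rceil$ versus $\lceil\log\Delta\rceil$ without losing more than constant factors, and being careful that the index ranges in the two nested loops line up so that adjacent nodes really do share an instance in which their colors are guaranteed disjoint. Since all of this is routine, the theorem statement (correctly) says the proof is omitted; the plan above is exactly how one fills it in.
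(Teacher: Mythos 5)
Your proposal is correct and fills in exactly the straightforward argument the paper omits for Theorem~\ref{thm:explicitalg}: disjointness via the shared instance $\mathcal{A}_{2^i,N}$ (whose disjointness guarantee does not depend on the degree bound) together with the $(i,j)$ tags, the geometric sum $\sum_i(\Delta/2^{i-1})^{\eps}=\Oh(\Delta^{\eps}/\eps)$ for the total color count, and the weight $\omega_{i^*}$ at level $i^*=\lceil\log\delta\rceil$ with $2^{i^*}=\Theta(\delta)$ for the per-node fraction. The only step worth tightening is your appeal to $|\mathcal{C}_{2^{\lceil\log\Delta\rceil},N}|=\Theta(|\mathcal{C}_{2\Delta,N}|)$, which could hide an $\ell$-dependent factor when $\ell$ is non-constant; it is in fact unnecessary, because the factor $|\mathcal{C}_{2^{\lceil\log\Delta\rceil},N}|$ cancels when you divide the per-node count by the actual total number of colors, and only monotonicity of $|\mathcal{C}_{\Delta,N}|$ in $\Delta$ is needed to bound that total by $\Oh(|\mathcal{C}_{2\Delta,N}|\cdot\Delta^{\eps}/\eps)$.
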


\begin{corollary}
  Let $\eps\in[0,1]$ and $\ell\ge 0$ be a fixed constant in all
  used instances of Algorithm \ref{alg:constructivebasic}. Then,
  Algorithm \ref{alg:constructive1} computes an
  $\big(\eps/\Oh(\delta^{\eps}\log_\Delta\ln^{(\ell)}N),
  \Oh(\Delta^{\ell+2} \cdot \log_\Delta
  N\cdot\log_\Delta\ln^{(\ell)}N)\big)$-multicoloring. In particular,
  choosing $\ell=0$ leads to an $\big(\eps/\Oh(\delta^\eps\log_\Delta
  N),\Oh(\Delta^2\log_\Delta^2N)\big)$-multicoloring. Taking the
  maximum possible value for $\ell$ in all used instances of
  Algorithm \ref{alg:constructivebasic} yields an
  $\big(\eps/\Oh(\delta^\eps), \Delta^{\Oh(\log^*N-\log^*\Delta)}
  \cdot \log_\Delta N\big)$-multicoloring.
\end{corollary}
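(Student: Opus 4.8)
The plan is to obtain the corollary purely by instantiation: combine the concrete parameter setting of Lemma~\ref{lemma:basicalgparams} with the general analysis of Algorithm~\ref{alg:constructive1} provided by Theorem~\ref{thm:explicitalg}, and then choose the free parameter $\ell$. Theorem~\ref{thm:explicitalg} needs only two quantities from the underlying instances $\mathcal{A}_{2^i,N}$ of Algorithm~\ref{alg:constructivebasic}: the fraction $f(N)/\Delta$ of colors each instance assigns to every node, and the number of colors $|\mathcal{C}_{2\Delta,N}|$ of the largest instance. So the whole proof is: fix one value of $\ell$ to be used in every instance, check that Lemma~\ref{lemma:basicalgparams}'s hypothesis $\ln^{(\ell)}N>\max\{e,\Delta\}$ holds for all of them (it suffices that $\ln^{(\ell)}N>\max\{e,2\Delta\}$, since every degree bound in use is at most $2\Delta$; this is automatic for $\ell=0$ in the regime $N\gg\Delta$ that is of interest, and for larger $\ell$ it is precisely the admissibility condition on $\ell$), read $f(N)$ and $|\mathcal{C}_{2\Delta,N}|$ off Lemma~\ref{lemma:basicalgparams}, and substitute into Theorem~\ref{thm:explicitalg}.

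Concretely, for an admissible constant $\ell$, Lemma~\ref{lemma:basicalgparams} applied with degree bound $2\Delta$ gives $|\mathcal{C}_{2\Delta,N}|=\Oh\!\big(\Delta^{\ell+2}\log_\Delta N\log_\Delta\ln^{(\ell)}N\big)$ (absorbing the $\ell$-dependent constant into the $\Oh$) and guarantees every node a $1/\Oh\!\big(\Delta\log_\Delta\ln^{(\ell)}N\big)$-fraction of the colors of each instance, i.e.\ $f(N)=\Omega\!\big(1/\log_\Delta\ln^{(\ell)}N\big)$ in the notation of Theorem~\ref{thm:explicitalg}. As $\eps$ is also a constant, the $\Delta^\eps/\eps$ factor in Theorem~\ref{thm:explicitalg}'s color count is a fixed power of $\Delta$ and is absorbed too. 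Substituting both quantities into Theorem~\ref{thm:explicitalg} yields exactly the stated $\big(\eps/\Oh(\delta^\eps\log_\Delta\ln^{(\ell)}N),\ \Oh(\Delta^{\ell+2}\log_\Delta N\log_\Delta\ln^{(\ell)}N)\big)$-multicoloring.

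The two displayed cases are the endpoints of the range of $\ell$. For $\ell=0$, $\ln^{(0)}N=N$, so $\log_\Delta\ln^{(0)}N=\log_\Delta N$ and the bound reads $\big(\eps/\Oh(\delta^\eps\log_\Delta N),\ \Oh(\Delta^2\log_\Delta^2N)\big)$. For the other endpoint I would take $\ell=\Theta(\log^*N-\log^*\Delta)$, chosen as the largest value for which $\ln^{(\ell)}N$ is still larger than $\max\{e,2\Delta\}$ but only polynomially larger than $\Delta$; then $\log_\Delta\ln^{(\ell)}N=\Oh(1)$, so the approximation ratio becomes $N$-independent, namely $\eps/\Oh(\delta^\eps)$, while the $\ell$-dependent factor $\Oh(\ell\Delta)^{\ell+2}$ in the color count collapses to $\Delta^{\Oh(\ell)}=\Delta^{\Oh(\log^*N-\log^*\Delta)}$, leaving $\Delta^{\Oh(\log^*N-\log^*\Delta)}\cdot\log_\Delta N$ colors.

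The only genuinely delicate point is the precise choice of ``the maximum possible $\ell$'' in the last case: $\ell$ must be large enough that $\ln^{(\ell)}N=\Delta^{\Oh(1)}$ (this is what forces $\log_\Delta\ln^{(\ell)}N=\Oh(1)$ and removes the $N$-dependence of the ratio), yet small enough that $\ln^{(\ell)}N>\max\{e,2\Delta\}$ still holds so that Lemma~\ref{lemma:basicalgparams} remains applicable to the largest instance; one also has to check that the $\ell^{\Oh(\ell)}$ hidden in $\Oh(\ell\Delta)^{\ell+2}$ is swallowed by $\Delta^{\Oh(\ell)}$. Everything else is routine asymptotic bookkeeping: replacing $2\Delta$ by $\Delta$ inside logarithms and $\Oh$-terms, collapsing the sum $\sum_i\omega_i|\mathcal{C}_{2^i,N}|$ of the color counts of Algorithm~\ref{alg:constructive1} to $\Oh(|\mathcal{C}_{2\Delta,N}|\Delta^\eps/\eps)$ using the geometric decay of the weights $\omega_i$, and tracking which instances a degree-$\delta$ node participates in (those of degree bound at least $\delta$) and the corresponding logarithm bases, which is also what couples the $\delta^\eps$ factor in the approximation ratio to the weights $\omega_i$ in Algorithm~\ref{alg:constructive1}.
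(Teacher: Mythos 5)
Your overall route -- reading $f(N)$ and $|\mathcal{C}_{2\Delta,N}|$ off Lemma~\ref{lemma:basicalgparams} and substituting into Theorem~\ref{thm:explicitalg} -- is the intended derivation, and it does give the general constant-$\ell$ statement and the $\ell=0$ case. (One small caveat there: the $\Delta^\eps/\eps$ factor from Theorem~\ref{thm:explicitalg} is not literally ``absorbed'' by $\Oh(\Delta^{\ell+2})$, since $\Delta^{\ell+2+\eps}\neq\Oh(\Delta^{\ell+2})$; the corollary's phrasing elides this factor as well, so this is a shared looseness rather than a flaw specific to your argument.)

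The genuine gap is in the last case, the very point you flag as delicate. You need an $\ell$ with $\max\{e,2\Delta\}<\ln^{(\ell)}N\le\Delta^{\Oh(1)}$, but such an $\ell$ need not exist: the sequence $\ln^{(i)}N$ can jump over this window, e.g.\ $\ln^{(\ell)}N\approx e^{1.9\Delta}$ while $\ln^{(\ell+1)}N\approx 1.9\Delta<2\Delta$. At the maximum admissible $\ell$ one is only guaranteed $\ln^{(\ell)}N\le e^{\max\{e,2\Delta\}}$, so $\log_\Delta\ln^{(\ell)}N$ can be as large as $\Theta(\Delta/\log\Delta)$; instantiating Lemma~\ref{lemma:basicalgparams} then yields a ratio of the form $\eps/\Oh(\delta^\eps\cdot\Delta/\log\Delta)$, not the claimed $N$-independent $\eps/\Oh(\delta^\eps)$. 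To obtain the third bound you cannot stay at the level of the lemma's displayed statement: you must reopen the parameter choice inside Algorithm~\ref{alg:constructivebasic} (this is what the remark following Lemma~\ref{lemma:basicalgparams} is about) and argue that after $\Theta(\log^*N-\log^*\Delta)$ levels the recursion $q_i\ge f_i\Delta d_i$, $q_i^{d_i+1}\ge q_{i-1}$ reaches its fixed point: once $q_{i-1}$ is polynomial in $\Delta$, one can take $d_i=\Oh(1)$ and hence $q_\ell=\Oh(\Delta)$, so every node gets an $\Omega(1/\Delta)$-fraction of the colors of each instance. Only with that strengthened per-instance guarantee does Theorem~\ref{thm:explicitalg} deliver the stated $\big(\eps/\Oh(\delta^\eps),\Delta^{\Oh(\log^*N-\log^*\Delta)}\cdot\log_\Delta N\big)$-multicoloring (the factor $\Delta^\eps$ and the residual $\log_\Delta\ln^{(\ell)}N$ terms are then harmlessly swallowed by $\Delta^{\Oh(\log^*N-\log^*\Delta)}$).
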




\section{Lower Bounds}
\label{sec:lowerbounds}

In this section, we give lower bounds on the number of colors required
for one-shot multicoloring algorithms. In fact, we even derive the
lower bounds for algorithms that need to assign only one color to
every node, i.e., the results even hold for standard coloring
algorithms.

It has been shown in \cite{oneround} that every deterministic
one-shot $c$-coloring algorithm $\mathcal{A}$ can be interpreted as a
set of $c$ antisymmetric relations on the ID set $[N]$. Assume that
$\mathcal{A}$ assigns a color from a set $C$ with $|C|=c$ to every
one-hop view $(x,\Gamma)$. For every color $\alpha\in C$, there is a
relation $\colrel{\alpha}$ such that for all $x,y\in[N]$
$x\not\colrel{\alpha}y \lor y\not\colrel{\alpha}x$.  Algorithm
$\mathcal{A}$ can assign color $\alpha\in C$ to a one-hop view
$(x,\Gamma)$ iff $\forall y\in\Gamma: x\colrel{\alpha}y$.

For $\alpha\in C$, let
$\badcol{\alpha}{x}:=\{y\in[N]:x\not\colrel{\alpha}y\}$ be the set of
IDs that must not be adjacent to an $\alpha$-colored node with ID $x$.
To show that there is no deterministic, one-shot $c$-coloring
algorithm, we need to show that for every $c$ antisymmetric relations
$\colrel{\alpha_1},\ldots,\colrel{\alpha_c}$ on $[N]$, there is a
one-hop view $(x,\Gamma)$ such that $\forall i\in[c]:
\Gamma\cap\badcol{\alpha_i}{x}\not=\emptyset$. The following lemma is
a generalization of Lemma 4.5 in \cite{oneround} and key for the
deterministic and the randomized lower bounds. As the proof is along
the same lines as the proof of Lemma 4.5 in \cite{oneround}, it is
omitted here.




\begin{lemma}\label{lemma:badcol}
  Let $X\subseteq[N]$ be a set of IDs and let $t_1,\ldots,t_{\ell}$
  and $k_1,\ldots,k_{\ell}$ be positive integers such that
  $$
  t_i\cdot\big(\lambda(|X|-c)t_i-c\big)>2c(k_i-1)\quad
  \text{for }1\le i\le \ell
  \text{ and a parameter }\lambda\in[0,1].
  $$
  Then there exists an ID set $X'\subseteq X$ with $|X'|>(1-\ell\cdot
  \lambda)\cdot(|X|-c)$ such that for all $i\in[\ell]$,
  \begin{equation*}
  \forall x\in X', \forall \alpha_1,\ldots, \alpha_{t_i}\in C: 
  \sum_{j=1}^{t_i}\big|\badcol{\alpha_j}{x}\cap X\big|\ge k_i,
  \quad
  \forall x\in X', \forall \alpha\in C: 
  \badcol{\alpha}{x}\cap X\not=\emptyset.
  \end{equation*}
\end{lemma}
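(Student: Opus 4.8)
The plan is to build $X'$ by deleting from $X$ a small ``bad'' set for each of the $\ell+1$ properties that must be guaranteed. The key observation is that, for a fixed $x$, the two displayed conclusions are literally the statements ``$x\notin B_0$'' and ``$x\notin B_i$'', where $B_0:=\{x\in X:\ \badcol{\alpha}{x}\cap X=\emptyset\text{ for some }\alpha\in C\}$ and, for $i\in[\ell]$, $B_i:=\{x\in X:\ \exists\,\alpha_1,\dots,\alpha_{t_i}\in C\text{ with }\sum_{j=1}^{t_i}|\badcol{\alpha_j}{x}\cap X|<k_i\}$. So it suffices to put $X':=X\setminus\bigl(B_0\cup B_1\cup\dots\cup B_\ell\bigr)$ and prove $|B_0|\le c$ and $|B_i|<\lambda(|X|-c)$ for every $i\in[\ell]$: then $|X'|\ge|X|-|B_0|-\sum_{i=1}^\ell|B_i|>(|X|-c)-\ell\lambda(|X|-c)=(1-\ell\lambda)(|X|-c)$, the strict inequality being inherited from the strict inequality in the hypothesis. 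Note that since the conclusions speak of intersections with $X$ rather than with $X'$, deleting IDs never spoils a surviving one, so the whole argument reduces to bounding $\bigcup_{i=0}^\ell B_i$.

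The bound $|B_0|\le c$ is immediate from antisymmetry: if $\badcol{\alpha}{x}\cap X=\emptyset$ then $x\colrel{\alpha}y$ for all $y\in X\setminus\{x\}$, and two distinct such IDs would violate the antisymmetry of $\colrel{\alpha}$, so for each of the $c$ colors at most one ID qualifies. The bound on $|B_i|$ is the crux and runs along the lines of Lemma~4.5 of \cite{oneround}; it rests on two ingredients. (i)~An \emph{antisymmetry count}: for any color $\alpha$ and any $S\subseteq X$ one has $\sum_{x\in S}|\badcol{\alpha}{x}\cap S|\ge\binom{|S|}{2}$, because each unordered pair $\{x,y\}\subseteq S$ contributes to at least one of the summands $|\badcol{\alpha}{x}\cap S|$, $|\badcol{\alpha}{y}\cap S|$; applying this with $S=\{x\in X:|\badcol{\alpha}{x}\cap X|<\tau\}$ forces $|S|<2\tau+1$, so only $\Oh(\tau)$ IDs can have an $\alpha$-bad set smaller than $\tau$. (ii)~A \emph{Markov/pigeonhole step}: if $x\in B_i$ then the $t_i$ smallest of the values $|\badcol{\alpha}{x}\cap X|$ $(\alpha\in C)$ sum to less than $k_i$, hence at least a constant fraction of those $t_i$ colors satisfy $|\badcol{\alpha}{x}\cap X|=\Oh(k_i/t_i)$; so each $x\in B_i$ is responsible for $\Omega(t_i)$ pairs $(x,\alpha)$ with $|\badcol{\alpha}{x}\cap X|=\Oh(k_i/t_i)$. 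Counting those pairs both ways — at least $\Omega(t_i)\,|B_i|$ of them by~(ii), at most $\sum_\alpha\Oh(k_i/t_i)=\Oh(ck_i/t_i)$ of them by~(i) — yields $|B_i|=\Oh(ck_i/t_i^2)$, and the exact inequality $t_i\bigl(\lambda(|X|-c)t_i-c\bigr)>2c(k_i-1)$ is calibrated precisely so as to push this estimate strictly below $\lambda(|X|-c)$.

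I expect the main obstacle to be ingredient~(ii) together with the constant bookkeeping in the double count: one must pick the threshold $\tau$ for an ``$\alpha$-bad set'' of the right order; one must handle carefully whether the colors $\alpha_1,\dots,\alpha_{t_i}$ are allowed to repeat, which is why we look at the $t_i$ \emph{smallest} values rather than $t_i$ copies of the minimum (this is where the $t_i^2$ appearing in the hypothesis comes from, and where a cruder argument would only give $\Oh(ck_i/t_i)$); and one must keep track of constants tightly enough that the final bound on $|B_i|$ really lands below $\lambda(|X|-c)$ rather than merely $\Oh(\lambda(|X|-c))$ — precisely the point where the exact shape of the hypothesis is used. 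Once $|B_0|\le c$ and $|B_i|<\lambda(|X|-c)$ are established, the claimed lower bound on $|X'|$ is the one-line computation given above.
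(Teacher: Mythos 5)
Your skeleton is the right one and is essentially the intended argument (the paper defers to Lemma~4.5 of \cite{oneround}): delete, for each color, the at most one ID whose bad set misses $X$ (antisymmetry indeed gives $|B_0|\le c$), delete a set $B_i$ per index, and read the first conclusion via the $t_i$ smallest values of $|\badcol{\alpha}{x}\cap X|$ over distinct colors — your remark that this, and not $t_i$ copies of the minimum, is where the $t_i^2$ comes from is exactly right. The gap is your ingredient~(ii). A threshold-plus-Markov double count gives, even after optimizing the threshold $\tau=\beta(k_i-1)/t_i$, only $|B_i|\le \frac{\beta^2}{\beta-1}\cdot\frac{2c(k_i-1)}{t_i^2}+\frac{\beta}{\beta-1}\cdot\frac{c}{t_i}$, minimized at $\beta=2$, i.e.\ $|B_i|\le 8c(k_i-1)/t_i^2+2c/t_i$; the hypothesis $t_i\big(\lambda(|X|-c)t_i-c\big)>2c(k_i-1)$ only guarantees $\lambda(|X|-c)>2c(k_i-1)/t_i^2+c/t_i$. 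So no choice of threshold in your scheme lands $|B_i|$ below $\lambda(|X|-c)$: what you call ``constant bookkeeping'' is not bookkeeping — that route provably loses a factor of about $4$ and can only prove the lemma under a constant-factor stronger hypothesis, not as stated (and the stated constants are what feed into Lemma \ref{lemma:badcolrand}).

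The step that closes it is one global double count with convexity, with no threshold at all. Suppose $m:=|B_i|\ge\lambda(|X|-c)$, and for each $x\in B_i$ fix a set $A_x$ of $t_i$ distinct colors with $\sum_{\alpha\in A_x}|\badcol{\alpha}{x}\cap X|\le k_i-1$; summing over $x\in B_i$ the total is at most $m(k_i-1)$. On the other hand, with $S_\alpha=\{x\in B_i:\alpha\in A_x\}$, your pair-counting ingredient (i) gives $\sum_{x\in S_\alpha}|\badcol{\alpha}{x}\cap X|\ge\binom{|S_\alpha|}{2}$, and since $\sum_{\alpha\in C}|S_\alpha|=mt_i$, convexity of $z\mapsto z(z-1)/2$ lower-bounds the total by $c\binom{mt_i/c}{2}=\frac{mt_i(mt_i-c)}{2c}$. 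Hence $t_i(mt_i-c)\le 2c(k_i-1)$. But the hypothesis (with $k_i\ge1$) forces $\lambda(|X|-c)t_i>c$, so $m\ge\lambda(|X|-c)$ gives $t_i(mt_i-c)\ge t_i\big(\lambda(|X|-c)t_i-c\big)>2c(k_i-1)$, a contradiction. This yields exactly $|B_i|<\lambda(|X|-c)$ for every $i$, and combined with your $|B_0|\le c$ the one-line count gives $|X'|>(1-\ell\lambda)(|X|-c)$ as claimed.
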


Based on several applications of Lemma \ref{lemma:badcol} (and based
on an $\Omega(\log\log N)$ lower bound in \cite{linial92}), it is
possible to derive an almost tight lower bound for deterministic
one-shot coloring algorithms. Due to lack of space, we only state the
result here.

\begin{theorem}\label{thm:detLB}
  If $N=\Omega(\Delta^2\log\Delta)$, every deterministic one-shot
  coloring algorithm needs at least $\Omega(\Delta^2+\log\log N)$
  colors.
\end{theorem}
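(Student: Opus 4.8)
The theorem claims two separate lower bounds that combine additively: $\Omega(\log\log N)$ and $\Omega(\Delta^2)$. The first summand is immediate from Linial's $\Omega(\log^* n)$ round lower bound for coloring the ring with constantly many colors — as noted in the related work section, a one-shot $c$-coloring algorithm with $c=O(1)$ would yield a $\log^*$-round ring coloring, forcing $c=\Omega(\log\log N)$. So the real work is the $\Omega(\Delta^2)$ bound, and since $\max\{a,b\}=\Theta(a+b)$, it suffices to prove each bound in isolation and take the max.

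For the $\Omega(\Delta^2)$ part, the plan is to argue by contradiction using Lemma \ref{lemma:badcol} with $\ell=1$ (the iterated/multi-round structure is only needed for the $\log\log N$ regime, not here). Suppose we have a deterministic one-shot $c$-coloring algorithm with $c = o(\Delta^2)$, encoded as antisymmetric relations $\colrel{\alpha_1},\dots,\colrel{\alpha_c}$ on $[N]$ via the framework set up right before the lemma. I would invoke Lemma \ref{lemma:badcol} with $X=[N]$, $\lambda$ a small constant (say $\lambda=1/2$), $\ell=1$, $t_1 = \Theta(\Delta)$, and $k_1 = \Delta+1$. The hypothesis $t_1\big(\lambda(|X|-c)t_1 - c\big) > 2c(k_1-1)$ becomes, for $N$ large, roughly $t_1^2 \lambda N \gtrsim c\Delta$, i.e. $\Delta^2 N \gtrsim c\Delta$, which holds comfortably when $c=o(\Delta^2)$ and $N=\Omega(\Delta^2\log\Delta)$ is large enough. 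The lemma then hands me a large set $X'$ of IDs such that every $x\in X'$ has, for every choice of $t_1=\Theta(\Delta)$ colors $\alpha_1,\dots,\alpha_{t_1}$, that $\sum_{j=1}^{t_1}|\badcol{\alpha_j}{x}\cap[N]| \ge \Delta+1$.

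The crux is then a counting/pigeonhole step: fix any $x\in X'$. Among the $c$ colors, consider the sum $\sum_{\alpha\in C}|\badcol{\alpha}{x}|$; since every $\alpha$ contributes $\badcol{\alpha}{x}\neq\emptyset$, and more sharply the ``bad sets sum'' condition from the lemma shows that even the $t_1$ colors with the \emph{smallest} bad sets together have total bad-set size $\ge\Delta+1$, I can extract $\Delta$ distinct colors $\alpha^{(1)},\dots,\alpha^{(\Delta)}$ and a set $\Gamma$ of $\le\Delta$ neighbor-IDs with $\Gamma\cap\badcol{\alpha^{(i)}}{x}\neq\emptyset$ for each $i$ — this is the standard greedy argument: repeatedly pick the color whose bad set (restricted to IDs not yet chosen) is nonempty, add one such ID to $\Gamma$; the lemma's sum condition guarantees we never run dry before covering $\Delta$ colors while keeping $|\Gamma|\le\Delta$. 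But covering only $\Delta$ of the $c = \omega(\Delta)$ colors is not enough — we need \emph{every} color blocked. This is where I'd need the sharper form: the condition says for \emph{any} $t_1$ colors the bad sets are large, which via a counting argument over all colors forces that a single $\Gamma$ of size $\Delta$ intersects $\badcol{\alpha}{x}$ for all $\alpha\in C$ simultaneously (not just $\Delta$ of them). That gives a one-hop view $(x,\Gamma)$ with $|\Gamma|\le\Delta$ that the algorithm cannot color at all — contradiction with Condition \eqref{eq:ngraphedge} and the definition of a correct coloring. Finally, combining with the $\Omega(\log\log N)$ bound from Linial under the hypothesis $N=\Omega(\Delta^2\log\Delta)$ (which ensures $N$ is large enough for the Chernoff-style inequality in the lemma's hypothesis to go through) yields $\Omega(\Delta^2+\log\log N)$.

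The main obstacle I anticipate is making the greedy covering argument cover \emph{all} $c$ colors with a single $\Gamma$ of size only $\Delta$: naively the greedy picks one new ID per color and would need $c\gg\Delta$ IDs. The resolution must be that each ID we add to $\Gamma$ lands in the bad sets of \emph{many} colors at once — quantitatively, since the $t_1$ smallest bad sets already sum to $\ge\Delta+1$ while there are only $N$ IDs total, the bad sets must overlap heavily, so a well-chosen $\Delta$-element $\Gamma$ hits all of them. Pinning down this overlap/covering bound precisely (presumably via averaging: the total incidence $\sum_\alpha|\badcol{\alpha}{x}|$ divided among $N$ IDs means some $\Delta$ IDs capture a $\ge\Delta/N$... — actually the right bound should come from the lemma's guarantee applied cleverly, exactly as in Lemma 4.5 of \cite{oneround}) is the delicate part, but it is precisely the content that Lemma \ref{lemma:badcol} was engineered to supply, so I expect the deduction to be short once the parameters $t_1,k_1,\lambda$ are tuned correctly.
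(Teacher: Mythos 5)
Your high-level skeleton (Linial's ring bound for the $\Omega(\log\log N)$ term, Lemma~\ref{lemma:badcol} for the $\Omega(\Delta^2)$ term, combined via a max) does match the outline the paper sketches, but the execution of the $\Omega(\Delta^2)$ part has a genuine gap exactly at the step you yourself flag as ``delicate.'' Invoking Lemma~\ref{lemma:badcol} once with $\ell=1$, $t_1=\Theta(\Delta)$, $k_1=\Delta+1$ only yields: for every surviving ID $x$, any $\Theta(\Delta)$ colors have bad sets summing to at least $\Delta+1$ inside $X$, i.e.\ the $\Theta(\Delta)$-th smallest bad set has constant size. This is far too weak to force a single set $\Gamma$ with $|\Gamma|\le\Delta$ to intersect $\badcol{\alpha}{x}$ for all $c=\Theta(\Delta^2)$ colors $\alpha$ simultaneously. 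Indeed, a family of $\Theta(\Delta^2)$ bad sets each of size about $N/\Delta$ in general position satisfies your guarantee with room to spare, yet hitting all of them requires about $\Delta\log\Delta$ elements (the usual set-cover/coupon-collector overhead), so no $\Delta$-element $\Gamma$ exists; your ``counting argument over all colors forces\dots'' sentence is asserted, not proved, and with your parameters it is false. This logarithmic overhead is precisely why \cite{oneround} only obtains $\Omega(\Delta^2/\log^2\Delta)$ and why the present theorem assumes $N=\Omega(\Delta^2\log\Delta)$: the slack in $N$ is what lets one show that the bad sets are large enough (of size roughly $(N/\Delta)\log\Delta$ at the relevant ranks) for a $\Delta$-element hitting set to exist. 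Your proposal never uses this hypothesis in a substantive way, which is a symptom of the missing idea.

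Concretely, what is needed (and what the paper's omitted proof, described as ``several applications of Lemma~\ref{lemma:badcol},'' supplies) is a multi-scale lower bound on the whole sorted profile of bad-set sizes: the hypothesis of Lemma~\ref{lemma:badcol} permits $k_i$ of order $\lambda N t_i^2/c$, i.e.\ of order $N$ rather than $\Delta$, and one must use several pairs $(t_i,k_i)$ (geometrically spaced, exactly as in Lemma~\ref{lemma:badcolrand} for the randomized bound) to control both the few colors with small bad sets (handled by placing designated IDs into $\Gamma$) and the bulk of colors (handled by a probabilistic/FKG-style covering as in Lemma~\ref{lemma:fkg}, or an equivalent careful greedy count), while keeping $|\Gamma|\le\Delta$. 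Your remark that ``the iterated structure is only needed for the $\log\log N$ regime'' misreads the lemma: $\ell$ there indexes the family of simultaneous threshold constraints, the $\log\log N$ term does not use the lemma at all (it comes from \cite{linial92}), and discarding all but one weak constraint is exactly what makes your covering step unprovable. The Linial-based $\Omega(\log\log N)$ part and the reduction of the problem to antisymmetric relations are fine; the heart of the theorem is the quantitative covering argument, and that is the part your proposal leaves unproved.
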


\subsection{Randomized Lower Bound}
\label{sec:randomLB}

To obtain a lower bound for randomized multicoloring algorithms, we
can again use the tools derived for the deterministic lower bound by
applying Yao's principle. On a worst-case input, the best randomized
algorithm cannot perform better than the best deterministic algorithm
for a given random input distribution.  Choosing the node labeling at
random allows to again only consider deterministic algorithms.

We assume that the $n$ nodes are assigned a random permutation of the
labels $1,\ldots,n$ (i.e., every label occurs exactly once). Note that
because we want to prove a lower bound, assuming the most restricted
possible ID space makes the bound stronger. For an ID $x\in[n]$, we
sort all colors $\alpha\in C$ by increasing values of
$|\badcol{\alpha}{x}|$ and let $\alpha_{x,i}$ be the $i^\mathit{th}$
color in this sorted order. Further, for $x\in[n]$, we define
$b_{x,i}:=\big|\badcol{\alpha_{x,i}}{x}\big|$. In the following, we
assume that
\begin{equation}
  \label{eq:nofcolors}
  c=\kappa\cdot\frac{\Delta\lfloor\ln n\rfloor}{\lceil\ln\ln n\rceil+2}
  \quad\mbox{and}\quad
  n\ge12
  \quad\mbox{and}\quad
  n\ge\Delta\cdot\ln n
\end{equation}
for a constant $0<\kappa\le 1$ that will be determined later. By
applying Lemma \ref{lemma:badcol} in different ways, the next lemma
gives lower bounds on the values of $b_{x,i}$ for $n/2$ IDs $x\in[n]$.
\begin{lemma}\label{lemma:badcolrand}
  Assume that $c$ and $n$ are as given by Equation
  \eqref{eq:nofcolors} and let $0<\rho<1/3$ be a positive constant.
  Further, let $\tilde{t}=\big\lceil\rho\ln n/\ln\ln n\big\rceil$ and
  $t_i=2^{i-1}\cdot\lfloor\ln n\rfloor$ for $1\le i\le\ell$ where
  $\ell=\lceil\ln\ln n\rceil+2$. Then, for at least $n/2$ of all IDs
  $x\in[n]$, we have
  \[
  b_{x,1}\ge \frac{\ln\ln n}{44\kappa\cdot\ln n}\cdot\frac{n}{\Delta}
  - 1,
  \quad
  b_{x,\tilde{t}}\ge\frac{\rho}{48\kappa}\cdot\frac{n}{\Delta}
  - \frac{1}{2},
  \quad
  b_{x,t_i}\ge 2^{i-1}\cdot\left(\frac{1}{8\kappa}\cdot\frac{n}{\Delta}
  - \frac{1}{2}\right)\ \ \ 
  \mbox{for}\ \ \ 1\le i\le \ell.
  \]
\end{lemma}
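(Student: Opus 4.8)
The plan is to apply Lemma \ref{lemma:badcol} three times, once for each of the three claimed lower bounds on the $b_{x,i}$, and then to intersect the three resulting "good" ID sets $X'$. In each application I would take $X=[n]$, so $|X|-c = n-c$, and I would choose the parameter $\lambda$ small enough that the exceptional set is at most $n/6$; since we then intersect three such sets, the complement has size at most $n/2$, leaving at least $n/2$ good IDs, as claimed. Concretely, for the first bound I would apply Lemma \ref{lemma:badcol} with $\ell=1$, $t_1=1$, and $k_1$ roughly equal to the target value $\frac{\ln\ln n}{44\kappa\ln n}\cdot\frac n\Delta$; the hypothesis $t_1(\lambda(|X|-c)t_1 - c) > 2c(k_1-1)$ becomes $\lambda(n-c) - c > 2c(k_1-1)$, and plugging in $c = \kappa\Delta\lfloor\ln n\rfloor/\ell$ together with $\lambda$ a suitable constant (something like $\lambda = 1/6$ so the excluded set is $(1-\lambda)(n-c)$-many kept, i.e. a $\lambda$-fraction excluded), one checks the inequality holds given $n\ge \Delta\ln n$ and $n\ge 12$. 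This requires bookkeeping but is a direct substitution. The final clause $\badcol\alpha x\cap X\neq\emptyset$ for all $\alpha$ also comes for free from Lemma \ref{lemma:badcol}.

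For the second bound, $b_{x,\tilde t}\ge \frac{\rho}{48\kappa}\cdot\frac n\Delta - \frac12$, I would again use $\ell=1$ but now with $t_1=\tilde t = \lceil\rho\ln n/\ln\ln n\rceil$ and $k_1 = \tilde t\cdot\big(\frac{\rho}{48\kappa}\frac n\Delta - \frac12\big)$ (roughly): Lemma \ref{lemma:badcol} then guarantees $\sum_{j=1}^{\tilde t}|\badcol{\alpha_j}{x}\cap X|\ge k_1$ for every choice of $\tilde t$ colors, and taking the $\tilde t$ colors to be $\alpha_{x,1},\dots,\alpha_{x,\tilde t}$ — the $\tilde t$ colors with the smallest $|\badcol\cdot x|$ — forces the largest of them, $b_{x,\tilde t}$, to be at least $k_1/\tilde t \approx \frac{\rho}{48\kappa}\frac n\Delta$. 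So the trick here (and for the third bound) is to exploit that the sorted order $\alpha_{x,1},\alpha_{x,2},\dots$ makes $b_{x,i}$ the $i$-th smallest badcol-size, hence a sum of the $t$ smallest is at most $t\cdot b_{x,t}$, inverting the inequality Lemma \ref{lemma:badcol} provides. Verifying the hypothesis $\tilde t(\lambda(n-c)\tilde t - c) > 2c(k_1-1)$ uses $\tilde t = \Theta(\ln n/\ln\ln n)$ and the precise value of $c$; the $\ln\ln n$ factors in $c$ and in $\tilde t$ are arranged to cancel, which is why the constants come out clean.

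For the third family of bounds, $b_{x,t_i}\ge 2^{i-1}(\frac1{8\kappa}\frac n\Delta - \frac12)$ for $1\le i\le\ell$, I would invoke Lemma \ref{lemma:badcol} a single time with the full list $t_1,\dots,t_\ell$ where $t_i = 2^{i-1}\lfloor\ln n\rfloor$ and $k_i = t_i(\frac1{8\kappa}\frac n\Delta - \frac12)$, and with $\lambda$ chosen so that $\ell\lambda$ is a small constant (say $\le 1/6$); here it is essential that Lemma \ref{lemma:badcol} handles all $\ell$ constraints simultaneously with a single exceptional set of size $\le \ell\lambda(n-c)$, rather than losing a factor per constraint. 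Again the same sorted-order argument turns $\sum_{j=1}^{t_i}|\badcol{\alpha_{x,j}}{x}\cap X|\ge k_i$ into $b_{x,t_i}\ge k_i/t_i$. The main obstacle I anticipate is not conceptual but arithmetic: one must choose $\lambda$ for each of the three applications, check that the somewhat delicate inequality $t_i(\lambda(n-c)t_i - c) > 2c(k_i-1)$ survives after substituting $c=\kappa\Delta\lfloor\ln n\rfloor/(\lceil\ln\ln n\rceil+2)$ and using only $n\ge 12$, $n\ge\Delta\ln n$, and $0<\kappa\le1$, and finally confirm that the three exceptional sets together miss at most $n/2$ IDs so that the surviving set $X'$ (the intersection of the three) has $|X'|\ge n/2$ and satisfies all three displayed bounds at once.
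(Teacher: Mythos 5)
Your overall route is the one the paper intends: the (omitted) proof is also obtained by applying Lemma \ref{lemma:badcol} with suitable choices of $t_i$, $k_i$, $\lambda$, and your key observation is correct that the guarantee $\sum_{j=1}^{t}\big|\badcol{\alpha_{x,j}}{x}\cap X\big|\ge k$, applied to the $t$ colors of smallest bad-set size, yields $b_{x,t}\ge k/t$ (and $t=1$ gives the bound on $b_{x,1}$). The genuine gap is in the $\lambda$/set-size bookkeeping, which is not the ``direct substitution'' you describe. Lemma \ref{lemma:badcol} only keeps $|X'|>(1-\ell\lambda)(|X|-c)$, so each application discards up to $c+\ell\lambda(n-c)$ IDs, and under \eqref{eq:nofcolors} $c$ can be as large as roughly $\kappa n/(\lceil\ln\ln n\rceil+2)$ (for instance $c\approx 0.19\,n$ when $\kappa=1$, $\Delta=\lfloor n/\ln n\rfloor$, $n=10^6$). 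Hence three separate applications lose the additive term $c$ three times, which by itself can exceed $n/2$; ``each exceptional set at most $n/6$'' cannot be arranged merely by taking $\lambda$ small, because the additive $c$ does not shrink with $\lambda$.

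Moreover, the stated constants force $\lambda$ to be large, in tension with the counting. For $t_i=2^{i-1}\lfloor\ln n\rfloor$ one gets $b_{x,t_i}\ge k_i/t_i\approx \frac{\lambda\,\ell\,2^{i-1}(n-c)}{2\kappa\Delta}-\frac12$ (the factor $\ell=\lceil\ln\ln n\rceil+2$ enters through $c$), so hitting $2^{i-1}\big(\frac{n}{8\kappa\Delta}-\frac12\big)$ already at $i=1$ requires $\ell\lambda\ge \frac{n}{4(n-c)}>\frac14$; your proposed ``$\ell\lambda\le 1/6$'' only delivers a constant of about $\frac{1}{12\kappa}$, not $\frac{1}{8\kappa}$. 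Similarly the first and second bounds force exclusions of order $n/22$ and $n/24$. So the total discarded mass is at least about $0.34\,n$ plus the $c$-terms, and the $n/2$ guarantee only survives if you (a) merge all $\ell+2$ constraints into a single application of Lemma \ref{lemma:badcol}, paying the additive $c$ only once, and (b) note that in the regime where $c=\Theta(n)$ (i.e.\ $\kappa\Delta\ln n=\Theta(n)$) the right-hand sides of the first two bounds (and, for very small $n$, also of the third) are nonpositive, so the corresponding constraints can be dropped there. This case analysis, driven by exactly the constants $44$, $48$, $8$ and the constraint $n\ge\Delta\ln n$, is the real content of the proof; as written, your plan does not establish the lemma for all parameters allowed by \eqref{eq:nofcolors}.
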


In order to prove the lower bound, we want to show that for a randomly
chosen one-hop view $(x,\Gamma)$ with $|\Gamma|=\Delta$, the
probability that there is a color $\alpha\in C$ for which
$\Gamma\cap\badcol{\alpha}{x}=\emptyset$ is sufficiently small.
Instead of directly looking at random one-hop views $(x,\Gamma)$ with
$|\Gamma|=\Delta$, we first look at one-hop views with
$|\Gamma|\approx\Delta/e$ that are constructed as follows. Let
$X\subseteq[n]$ be the set of IDs $x$ of size $|X|\ge n/2$ for which
the bounds of Lemma \ref{lemma:badcolrand} hold. We choose $x_R$
uniformly at random from $X$. The remaining $n-1$ IDs are
independently added to a set $\Gamma_R$ with probability
$p=\frac{\Delta}{en}$. For a color $\alpha\in C$, let
$\mathcal{E}_\alpha$ be the event that
$\Gamma_R\cap\badcol{\alpha}{x_R}\not=\emptyset$, i.e.,
$\mathcal{E}_\alpha$ is the event that color $\alpha$ cannot be
assigend to the randomly chosen one-hop view $(x_R,\Gamma_R)$.

\begin{lemma}\label{lemma:fkg}
  The probability that the randomly chosen one-hop view cannot be
  assigned one of the $c$ colors in $C$ is bounded by
  \[
  \pr\left[\bigcap_{\alpha\in C}\mathcal{E}_\alpha\right]\ \ge\
  \prod_{\alpha\in C}\pr\big[\mathcal{E}_\alpha\big]\ \ge\
  \prod_{\alpha\in C}\left(
    1-e^{-\frac{\Delta}{en}\cdot|\badcol{\alpha}{x_R}|}
  \right)\ =\ 
  \prod_{i=1}^c\left(
    1-e^{-\frac{\Delta\cdot b_{x_R,i}}{en}}
  \right).
  \]
\end{lemma}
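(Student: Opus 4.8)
The plan is to prove the three (in)equalities in the display from right to left, and the crucial middle inequality by an appeal to the FKG inequality (positive correlation). First, the rightmost equality is just bookkeeping: the set $\badcol{\alpha}{x_R}$ has some size, and by the definition $b_{x_R,i} := |\badcol{\alpha_{x_R,i}}{x_R}|$ the multiset of values $\{|\badcol{\alpha}{x_R}| : \alpha\in C\}$ equals $\{b_{x_R,i} : i\in[c]\}$, so $\prod_{\alpha\in C}(1-e^{-\frac{\Delta}{en}|\badcol{\alpha}{x_R}|}) = \prod_{i=1}^c(1-e^{-\frac{\Delta\cdot b_{x_R,i}}{en}})$ term by term after reindexing. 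Next, for the second inequality I would fix $x_R$ and compute $\pr[\mathcal{E}_\alpha \mid x_R]$ directly: each of the $n-1$ other IDs lands in $\Gamma_R$ independently with probability $p=\frac{\Delta}{en}$, so $\pr[\Gamma_R\cap\badcol{\alpha}{x_R}=\emptyset \mid x_R] = (1-p)^{|\badcol{\alpha}{x_R}\setminus\{x_R\}|} \le (1-p)^{|\badcol{\alpha}{x_R}|-1}$, and since $x_R$ itself is not a candidate for $\Gamma_R$ one must be slightly careful about whether $x_R\in\badcol{\alpha}{x_R}$; in any case $(1-p)^{m}\le e^{-pm}$ gives $\pr[\mathcal{E}_\alpha\mid x_R] = 1 - (1-p)^{\text{(that exponent)}} \ge 1 - e^{-\frac{\Delta}{en}|\badcol{\alpha}{x_R}|}$ (absorbing the harmless off-by-one into the bound, exactly as the ``$-1$'' terms already appearing in Lemma \ref{lemma:badcolrand} are designed to accommodate). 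Taking the product over $\alpha$ then needs only that this lower bound holds conditionally on $x_R$ and unconditionally afterwards — but note the right-hand side still depends on $x_R$, so strictly this chain of inequalities is stated conditioned on the choice of $x_R$, which is the intended reading.

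The main step is the leftmost inequality $\pr[\bigcap_{\alpha\in C}\mathcal{E}_\alpha] \ge \prod_{\alpha\in C}\pr[\mathcal{E}_\alpha]$, i.e.\ that the bad events $\mathcal{E}_\alpha$ are positively correlated. Here I would again condition on $x_R$ and work in the product probability space $\{0,1\}^{n-1}$ indexed by the remaining IDs, where coordinate $y$ is the indicator of $y\in\Gamma_R$, each coordinate independent Bernoulli$(p)$. Each event $\mathcal{E}_\alpha = \{\Gamma_R\cap\badcol{\alpha}{x_R}\neq\emptyset\}$ is an \emph{increasing} (monotone up) event in this product space: adding more IDs to $\Gamma_R$ can only help it intersect $\badcol{\alpha}{x_R}$. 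By the FKG / Harris inequality for product measures, any finite collection of increasing events is positively correlated, so $\pr[\bigcap_\alpha \mathcal{E}_\alpha] \ge \prod_\alpha \pr[\mathcal{E}_\alpha]$. Averaging (or rather, since every bound in the chain is conditional on $x_R$, simply keeping $x_R$ fixed as $x_R$ is the random variable the final expression is written in terms of) completes the argument.

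The part requiring the most care is making sure the correlation inequality is applied to the right random object: the randomness in $\Gamma_R$ is genuinely an independent product (each non-$x_R$ ID included independently w.p.\ $p$), which is exactly the setting where FKG/Harris applies, so the only real obstacle is a presentational one — being explicit that everything is conditioned on $x_R$, and that the events live in the $\Gamma_R$-product-space with $x_R$ held fixed. No new quantitative estimate is needed; the numerical content is entirely deferred to Lemma \ref{lemma:badcolrand}, and this lemma is purely the structural ``FKG + union-of-geometrics'' bound that converts those $b_{x,i}$ estimates into a lower bound on the probability that a sampled one-hop view is uncolorable.
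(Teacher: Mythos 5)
Your proposal is correct and follows essentially the same route as the paper's proof: bound $\pr\big[\overline{\mathcal{E}_\alpha}\big]=(1-p)^{|\badcol{\alpha}{x_R}|}\le e^{-p|\badcol{\alpha}{x_R}|}$ with $p=\Delta/(en)$, and obtain $\pr\big[\bigcap_\alpha\mathcal{E}_\alpha\big]\ge\prod_\alpha\pr[\mathcal{E}_\alpha]$ from positive correlation of the increasing events $\mathcal{E}_\alpha$ in the independent-inclusion product space, i.e.\ the FKG/Harris inequality, everything read conditionally on $x_R$. Your extra care about whether $x_R$ itself lies in $\badcol{\alpha}{x_R}$ (and the resulting off-by-one in the exponent) is a harmless refinement the paper glosses over; otherwise the two arguments coincide.
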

\begin{proof}
  Note first that for $\alpha\in C$, we have
  \[
  \pr\big[\overline{\mathcal{E}_\alpha}\big]\ =\ 
  \pr\big[\Gamma_R\cap\badcol{\alpha}{x_R}=\emptyset\big]\ =\
  (1-p)^{|\badcol{\alpha}{x_R}|}\ \le\ 
  e^{-p|\badcol{\alpha}{x_R}|}\ =\
  e^{-\frac{\Delta}{en}\cdot|\badcol{\alpha}{x_R}|}.
  \]
  It therefore remains to prove that the probability that all events
  $\mathcal{E}_\alpha$ occur can be lower bounded by the probability
  that would result for independent events. Let us denote the colors
  in $C$ by $\alpha_1,\ldots,\alpha_c$. We then have
  \begin{equation}\label{eq:fkg}
  \pr\left[\bigcap_{\alpha\in C}\mathcal{E}_\alpha\right]
  \ = \ \prod_{i=1}^c
  \pr\left[\mathcal{E}_{\alpha_i}\Bigg|\bigcap_{j=1}^{i-1}
    \mathcal{E}_{\alpha_j}\right]
  \ \ge\ 
  \prod_{i=1}^c\pr\big[\mathcal{E}_{\alpha_i}\big].
  \end{equation}
  The inequality holds because the events $\mathcal{E}_{\alpha}$ are
  positively correlated. Knowing that an element from a set
  $\badcol{\alpha}{x_R}$ is in $\Gamma_R$ cannot decrease the
  probability that an element from a set $\badcol{\alpha'}{x_R}$ is in
  $\Gamma_R$. Note that this is only true because the IDs are
  independently added to $\Gamma_R$. More formally, Inequality
  \eqref{eq:fkg} can also directly be followed from the FKG inequality
  \cite{fkg}.
\end{proof}

For space reasons, the following two lemmas are given without
proof.

\begin{lemma}\label{lemma:probbound}
  Assume that $c$ and $n$ are given as in \eqref{eq:nofcolors} where
  the constant $\kappa$ is chosen sufficiently small and let $\rho>0$
  be a constant as in Lemma \ref{lemma:badcolrand}. There is a
  constant $n_0>0$ such that for $n\ge n_0$,
  \( 
  \pr\left[\bigcap_{\alpha\in C}\mathcal{E}_\alpha\right]
  \ >\ \frac{1}{2n^{3\rho}}.
  \)
\end{lemma}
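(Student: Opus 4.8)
The plan is to start from the product lower bound of Lemma~\ref{lemma:fkg}. Fix any ID $x_R$ in the set $X$ of at least $n/2$ IDs for which the estimates of Lemma~\ref{lemma:badcolrand} hold ($X\neq\emptyset$ since $n\ge12$); since the argument below only uses those estimates, it will apply uniformly to all such $x_R$. By Lemma~\ref{lemma:fkg} it is enough to show
\[
\sum_{i=1}^{c}\ln\!\left(1-e^{-\Delta b_{x_R,i}/(en)}\right)\ >\ -\ln 2-3\rho\ln n .
\]
Because the colors are listed by increasing $|\badcol{\alpha}{x_R}|$, the sequence $b_{x_R,1}\le b_{x_R,2}\le\cdots$ is nondecreasing, so on any block of consecutive indices every $b_{x_R,i}$ may be replaced by its value at the left endpoint. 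I would therefore cut $[c]$ at the indices $\tilde t$ and $t_1<t_2<\cdots<t_\ell$, bound each block separately using the three estimates of Lemma~\ref{lemma:badcolrand} (and monotonicity for indices above $t_\ell$), and use throughout the elementary inequalities $1-e^{-z}\ge z/2$ for $0\le z\le1$ and $\ln(1-y)\ge-2y$ for $0\le y\le\tfrac12$.

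The expensive block is $[1,\tilde t]$. There $b_{x_R,i}\ge b_{x_R,1}$, and using $n\ge\Delta\ln n$ to absorb the additive $-1$ one gets $\Delta b_{x_R,1}/(en)\ge z_0$ for some $z_0\ge\Omega\!\bigl(\tfrac{\ln\ln n}{\kappa\ln n}\bigr)$, which tends to $0$; by monotonicity of $1-e^{-z}$ and $1-e^{-z_0}\ge z_0/2$, each of the $\tilde t\le\rho\ln n/\ln\ln n+1$ terms is then at least $\ln(z_0/2)=-(1+o(1))\ln\ln n$, so this block contributes at least $-(1+o(1))\rho\ln n$. In every other block $(t_j,t_{j+1}]$ (treating $(\tilde t,t_1]$ the same way via the $b_{x_R,\tilde t}$-estimate), the relevant bound is $b_{x_R,i}\ge 2^{j-1}\bigl(\tfrac{1}{8\kappa}\cdot\tfrac{n}{\Delta}-\tfrac12\bigr)$, so $\Delta b_{x_R,i}/(en)=\Omega(2^{j-1}/\kappa)$, a large constant once $\kappa$ is small; hence $\ln\!\bigl(1-e^{-\Delta b_{x_R,i}/(en)}\bigr)\ge-2e^{-\Omega(2^{j-1}/\kappa)}$, and since $(t_j,t_{j+1}]$ contains at most $2^{j}\lfloor\ln n\rfloor$ indices its contribution is at least $-2^{j+1}\ln n\cdot e^{-\Omega(2^{j-1}/\kappa)}$. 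Summing the resulting geometric series over $j$ gives at least $-\Oh(\ln n)\cdot e^{-\Omega(1/\kappa)}$, which is $\ge-\rho\ln n$ once $\kappa$ is chosen small enough in terms of $\rho$ and of the absolute constants $44,48,8$ of Lemma~\ref{lemma:badcolrand}. Adding up, the total is at least $-(2+o(1))\rho\ln n>-\ln 2-3\rho\ln n$ for $n\ge n_0$, as claimed.

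The step I expect to be the real obstacle is the tail, i.e.\ the indices in $(t_\ell,c]$ in the case $c>t_\ell$. The only estimate delivered directly by Lemma~\ref{lemma:badcolrand} there is $b_{x_R,i}\ge b_{x_R,t_\ell}=\Omega\!\bigl(2^{\ell-1}n/(\kappa\Delta)\bigr)$ with merely $2^{\ell-1}\ge 2(\ln n)^{\ln 2}$, and one must argue that this—times the up to $c\le\kappa n/\ell$ terms in the block—still keeps the contribution above $-\rho\ln n$; the clean way seems to be to exploit that the bad-set sizes keep growing (at least geometrically) past $t_\ell$, which is exactly what the argument behind Lemma~\ref{lemma:badcolrand} produces, so that $\Delta b_{x_R,i}/(en)$ increases with $i$ along the tail and the per-term bound $-2e^{-\Delta b_{x_R,i}/(en)}$ becomes summable regardless of how many colors there are. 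The remaining ingredients—rewriting the product of Lemma~\ref{lemma:fkg} as a sum of logarithms, the two elementary inequalities, and the geometric-series estimate—are routine bookkeeping, with the final choice of $\kappa$ and $n_0$ just soaking up the constants.
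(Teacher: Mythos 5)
Your strategy for the indices $i\le t_\ell$ — rewriting the product of Lemma~\ref{lemma:fkg} as a sum of logarithms, cutting at $\tilde t, t_1,\dots,t_\ell$, using $1-e^{-z}\ge z/2$ on the first block and $\ln(1-y)\ge -2y$ on the others, and soaking up constants by taking $\kappa$ small — is sound, and your accounting there (roughly $-\rho\ln n$ from $[1,\tilde t]$, a geometric series worth $-\Oh(\ln n)\,e^{-\Omega(1/\kappa)}$ from the dyadic blocks) works out. The genuine gap is exactly the tail $i\in(t_\ell,c]$, which you flag and then dismiss with the hope that bad-set sizes "keep growing at least geometrically past $t_\ell$" because that is "exactly what the argument behind Lemma~\ref{lemma:badcolrand} produces". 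Lemma~\ref{lemma:badcolrand} as stated gives nothing beyond $t_\ell$, and the tail is not negligible: it is nonempty as soon as $\kappa\Delta>\ell\,2^{\ell-1}=\Theta\bigl((\ln n)^{\ln 2}\ln\ln n\bigr)$, and since \eqref{eq:nofcolors} only requires $n\ge\Delta\ln n$ (and Theorem~\ref{thm:randLB} is meant to cover $\Delta$ up to $n^{1-\eps}$), it can contain $c-t_\ell=\Theta(\kappa\Delta\ln n/\ln\ln n)$ indices, i.e.\ polynomially many. With only monotonicity, $b_{x,i}\ge b_{x,t_\ell}$, you get $\Delta b_{x,i}/(en)\ge\tfrac{2^{\ell-1}}{e}\bigl(\tfrac{1}{8\kappa}-\tfrac{\Delta}{2n}\bigr)=\Theta\bigl((\ln n)^{\ln 2}/\kappa\bigr)$, so each tail term is only bounded below by $-2e^{-\Theta((\ln n)^{\ln 2}/\kappa)}$, and over $c$ terms the loss is of order $\Delta\ln n\cdot e^{-\Theta((\ln n)^{\ln 2}/\kappa)}$, which for $\Delta=n^{\Omega(1)}$ and fixed $\kappa$ tends to infinity far faster than $\ln n$ (the exponent $(\ln n)^{\ln 2}$ is $o(\ln n)$). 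Your bound on the product then degenerates to essentially $0$ instead of $1/(2n^{3\rho})$; the monotone tail argument only survives while $\ln\Delta=\Oh\bigl((\ln n)^{\ln 2}/\kappa\bigr)$.

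Closing this is more than bookkeeping. What is needed is that $b_{x,i}$ keeps growing, essentially linearly in $i$, for all $i$ up to $c$, i.e.\ bounds at thresholds $t_j=2^{j-1}\lfloor\ln n\rfloor$ continued up to $j\approx\log_2(c/\ln n)=\Theta(\log\Delta)$. This does not follow from Lemma~\ref{lemma:badcol} as stated: a single shared $\lambda$ for $\Theta(\log\Delta)$ pairs forces $\lambda=\Oh(1/\log\Delta)$, which shrinks the admissible $k_i$ and, in particular, ruins the contributions of the blocks just above $t_\ell$ (there $\Delta b_{x,t_j}/(en)$ would drop to $o(1)$), while running many separate applications to vary $\lambda$ is also not affordable, since each costs an additive $c$ in the surviving ID count. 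One has to redo the union-bound argument behind Lemma~\ref{lemma:badcol} with per-pair fractions $\lambda_j$ decaying geometrically along the tail (their sum is tiny, and the feasibility condition still admits $b_{x,t_j}=\Omega(j\,n/\Delta)$), after which the tail sum $\sum_{i>t_\ell}e^{-\Delta b_{x,i}/(en)}$ is $\Oh(1)$ and the rest of your computation goes through. So the step you yourself single out as the obstacle is a real missing ingredient, not routine, and as written the proof does not establish the lemma in the large-$\Delta$ regime it must cover.
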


\begin{lemma}\label{lemma:randomview}
  Let  $(x,\Gamma)$ be a one-hop  view chosen uniformly at random from
  all one-hop views  with $|\Gamma|=\Delta$. If $\Delta\ge e(\ln n+2)$
  and $n$, $c$, and $\rho$ are as before, the probability that none of
  the  $c$ colors   can be   assigned to   $(x,\Gamma)$  is  at  least
  $1/(8n^{3\rho})$.
\end{lemma}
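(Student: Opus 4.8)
The plan is to transfer the bound of Lemma~\ref{lemma:probbound}, which is stated for the ``relaxed'' one-hop view $(x_R,\Gamma_R)$ whose neighbor set has random (binomial) size, to a genuinely uniform one-hop view $(x,\Gamma)$ with $|\Gamma|=\Delta$. The whole argument rests on one structural observation: for a fixed ID $x$, the event
\[
\mathcal{B}(x,\Gamma):=\bigcap_{\alpha\in C}\big\{\Gamma\cap\badcol{\alpha}{x}\neq\emptyset\big\}
\]
that none of the $c$ colors can be assigned to $(x,\Gamma)$ is \emph{monotone increasing} in $\Gamma$ --- enlarging $\Gamma$ can only make more of the sets $\badcol{\alpha}{x}$ hit. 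Consequently, if $\Gamma$ is a uniformly random $m$-subset of $[n]\setminus\{x\}$, then $\pr[\mathcal{B}(x,\Gamma)]$ is nondecreasing in $m$ (couple the smaller subset to a sub-subset of the larger one). Note that $\mathcal{B}(x_R,\Gamma_R)=\bigcap_{\alpha\in C}\mathcal{E}_\alpha$ in the notation used before Lemma~\ref{lemma:fkg}.

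First I would control the size of $\Gamma_R$. Each of the $n-1$ IDs other than $x_R$ is put into $\Gamma_R$ independently with probability $p=\Delta/(en)$, so $|\Gamma_R|$ is binomial with mean $\mu=(n-1)p$, and the hypothesis $\Delta\ge e(\ln n+2)$ gives $\mu\ge(1-1/n)(\ln n+2)$, while trivially $\mu\le\Delta/e$. A multiplicative Chernoff bound with deviation factor $\Delta/\mu=en/(n-1)>e$ then yields $\pr[|\Gamma_R|>\Delta]\le e^{-\mu}=O(1/n)$, which is $o(n^{-3\rho})$ because $3\rho<1$. Since $x_R$ (uniform in the set $X$ from Lemma~\ref{lemma:badcolrand}) is independent of the inclusion process, and since conditioned on $|\Gamma_R|=m$ the set $\Gamma_R$ is a uniformly random $m$-subset of $[n]\setminus\{x_R\}$, splitting the mixture over $m=|\Gamma_R|$ at $m=\Delta$ and using the monotonicity above gives
\[
\pr\big[\mathcal{B}(x_R,\Gamma_R)\big]\;\le\;\pr\big[\mathcal{B}(x_R,\Gamma)\,\big|\,|\Gamma|=\Delta\big]+\pr\big[|\Gamma_R|>\Delta\big].
\]
Hence, invoking Lemma~\ref{lemma:probbound}, $\pr[\mathcal{B}(x_R,\Gamma)\mid|\Gamma|=\Delta]\ge\tfrac{1}{2}n^{-3\rho}-o(n^{-3\rho})\ge\tfrac14 n^{-3\rho}$ for $n$ at least a suitable constant.

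It then remains to pass from $x_R$ (uniform in $X$) to an ID $x$ uniform in $[n]$. Because $|X|\ge n/2$, drawing $x$ uniformly from $[n]$ falls in $X$ with probability at least $1/2$, and conditioned on $x\in X$ it is uniform in $X$, while in both the conditioned and the target experiment the neighbor set is an independent uniformly random $\Delta$-subset of the complement of $\{x\}$. Hence
\[
\pr_{|\Gamma|=\Delta}\big[\mathcal{B}(x,\Gamma)\big]\;\ge\;\frac{|X|}{n}\cdot\pr\big[\mathcal{B}(x_R,\Gamma)\,\big|\,|\Gamma|=\Delta\big]\;\ge\;\frac12\cdot\frac{1}{4n^{3\rho}}\;=\;\frac{1}{8n^{3\rho}},
\]
as claimed. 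The only genuinely delicate part is keeping the several layers of conditioning (on $x_R\in X$, on $|\Gamma_R|$, and on $|\Gamma_R|\le\Delta$) pointed in the right direction; monotonicity of $\mathcal{B}$ is exactly what makes each of them go the right way, and since $\Delta=\Omega(\log n)$ the Chernoff estimate for $|\Gamma_R|$ is strong enough that the transfer from the binomial model to the fixed-size model costs only a lower-order additive term, the sole constant-factor loss coming from $|X|\ge n/2$.
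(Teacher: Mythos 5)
Your proof is correct: the monotonicity of the ``no color fits'' event in $\Gamma$, the Chernoff bound $\pr[|\Gamma_R|>\Delta]\le e^{-\mu}\le 1/n=o(n^{-3\rho})$ (which is exactly where the hypothesis $\Delta\ge e(\ln n+2)$ is needed), and the factor $|X|/n\ge 1/2$ together turn Lemma~\ref{lemma:probbound}'s bound $1/(2n^{3\rho})$ into the claimed $1/(8n^{3\rho})$, with the constants accounting precisely for the three losses. The paper omits this proof for space, but your binomial-to-fixed-size transfer is the intended route suggested by its setup of $(x_R,\Gamma_R)$ with $p=\Delta/(en)$, so there is nothing further to add.
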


In the following, we call a node $u$ together with $\Delta$ neighbors
$v_1,\dots,v_\Delta$, a $\Delta$-star.

\begin{theorem}\label{thm:randLB}
  Let $G$ be a graph with $n$ nodes and $2n^\eps$ disjoint
  $\Delta$-stars for a constant $\eps>0$. On $G$, every randomized
  one-shot coloring algorithm needs at least $\Omega(\Delta\log
  n/\log\log n)$ colors in expectation and with high probability.
\end{theorem}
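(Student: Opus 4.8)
The plan is to invoke Yao's principle with the input distribution that fixes the graph $G$ and assigns the labels $1,\dots,n$ to its nodes by a uniformly random permutation, exactly the setup used for Lemma~\ref{lemma:randomview}. Fix a constant $\rho$ with $0<\rho<\eps/3$, and let $c^*=\Theta(\Delta\log n/\log\log n)$ be a threshold of the form $\kappa\Delta\lfloor\ln n\rfloor/(\lceil\ln\ln n\rceil+2)$ with $\kappa>0$ small enough that the hypotheses of Lemma~\ref{lemma:randomview} (and of the lemmas it relies on) are met, and moreover remain met when $[n]$ is replaced by any subset of size at least $3n/4$. Since adding unused colors never hurts a coloring algorithm, it suffices to show: every \emph{deterministic} one-shot algorithm $\mathcal{A}$ using exactly $c^*$ colors produces an improper coloring of the randomly labeled $G$ with probability $1-e^{-n^{\Omega(1)}}$. (We work in the regime $\Delta\ge e(\ln n+2)$ required by Lemma~\ref{lemma:randomview}.)

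To this end I would use the $2n^\eps$ disjoint $\Delta$-stars. Let $u_1,\dots,u_{2n^\eps}$ be their centers. The star at $u_j$ consists of $u_j$ and exactly $\Delta$ leaves, so under the random labeling the one-hop view $(x_j,\Gamma_j)$ of $u_j$ is, marginally, a uniformly random one-hop view with $|\Gamma_j|=\Delta$ and labels in $[n]$. Hence, by Lemma~\ref{lemma:randomview}, the event
\[
B_j:=\Big\{\forall\alpha\in C:\ \Gamma_j\cap\badcol{\alpha}{x_j}\ne\emptyset\Big\}
\]
that $\mathcal{A}$ has no legal color available for $u_j$ satisfies $\pr[B_j]\ge 1/(8n^{3\rho})$. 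If $B_j$ occurs for some $j$, then whatever color $\mathcal{A}$ outputs for $u_j$ conflicts with one of $u_j$'s neighbors, so the coloring is improper. It therefore remains to show that $\pr\big[\bigcup_j B_j\big]\ge 1-e^{-n^{\Omega(1)}}$.

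The one real obstacle is that the $B_j$ are \emph{not} independent, since the labeling is a single random permutation; I would circumvent this by revealing the labels star by star and using only the first $J:=n^\eps/2$ stars. Because the stars are disjoint and $2n^\eps(\Delta+1)\le n$ forces $\Delta+1\le n^{1-\eps}/2$, after the labels of stars $1,\dots,j-1$ have been exposed at most $J(\Delta+1)\le n/4$ labels are fixed, so the labels of star $j$ form a uniformly random ordered $(\Delta+1)$-tuple from a residual ground set $S_j$ with $|S_j|\ge 3n/4$. Re-running Lemmas~\ref{lemma:badcol}, \ref{lemma:badcolrand}, \ref{lemma:fkg} and \ref{lemma:probbound} verbatim with $[n]$ replaced by $S_j$ — which affects only the constants, since $|S_j|=\Theta(n)$ and $\Delta+1=o(n)$ — gives $\pr[B_j\mid\text{labels of stars }1,\dots,j-1]\ge 1/(8n^{3\rho})$ for every $j\le J$ and every outcome of the earlier stars. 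Consequently,
\[
\pr\Big[\bigcap_{j=1}^{J}\overline{B_j}\Big]=\prod_{j=1}^{J}\pr\big[\overline{B_j}\mid\overline{B_1},\dots,\overline{B_{j-1}}\big]\le\Big(1-\tfrac{1}{8n^{3\rho}}\Big)^{J}\le e^{-n^{\eps}/(16 n^{3\rho})}=e^{-n^{\Omega(1)}},
\]
where the last step uses $3\rho<\eps$. Thus $\mathcal{A}$ is improper on the randomly labeled $G$ with probability $1-e^{-n^{\Omega(1)}}$.

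It remains to deduce the two statements of the theorem. For any randomized one-shot algorithm using at most $c^*$ colors, averaging the deterministic bound over its coin tosses and then over labelings, there is a labeling of $G$ on which the algorithm is improper with probability $1-e^{-n^{\Omega(1)}}$; hence a randomized algorithm that outputs a proper coloring with high probability must, on this $G$, use more than $c^*=\Omega(\Delta\log n/\log\log n)$ colors with probability $1-1/n^{\Omega(1)}$, which is the \emph{with high probability} claim. For the \emph{in expectation} claim, let $Z$ be the number of distinct colors the algorithm uses; if $\pr[Z\le c^*/2]$ were non-negligible, capping the palette to $c^*/2$ colors would yield a $\le c^*$-color algorithm that is still proper with non-negligible probability, contradicting the bound just proved. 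Hence $\pr[Z>c^*/2]=1-o(1)$, and in particular $\expect[Z]=\Omega(\Delta\log n/\log\log n)$.
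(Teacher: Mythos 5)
Your proposal is correct and follows essentially the same route as the paper: Yao's principle with a uniformly random labeling, sequential exposure of the disjoint $\Delta$-stars, an application of Lemma~\ref{lemma:randomview} to each star over the residual ID set (which the paper likewise bounds by $1/(8(n-i(\Delta+1))^{3\rho})\ge 1/(8n^{3\rho})$), and a product bound giving failure probability $e^{-n^{\Omega(1)}}$, with only cosmetic differences ($\rho<\eps/3$ and $n^\eps/2$ stars versus the paper's $\rho\le\eps/4$ and $n^\eps$ stars). Your explicit remarks on re-running the lemmas over a residual ground set of size $\ge 3n/4$ and on deducing the expectation claim are if anything slightly more careful than the paper's compressed presentation.
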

\begin{proof}
  W.l.o.g., we can certainly assume that $n\ge n_0$ for a sufficiently
  large constant $n_0$. We choose $\rho\le\eps/4$ and consider $n^\eps$
  of the $2n^\eps$ disjoint $\Delta$-stars. Let us call these $n^\eps$
  $\Delta$-stars $S_1,\dots,S_{n^\eps}$. Assume that the ID assignment
  of the $n$ nodes of $G$ is chosen uniformly at random from all ID
  assignments with IDs $1,\dots,n$. The IDs of the star $S_1$ are
  perfectly random. We can therefore directly apply Lemma
  \ref{lemma:randomview} and obtain that the probability that the
  center node of $S_1$ gets no color is at least $1/(8n^{3\rho})$.
  Consider star $S_2$. The IDs of the nodes of $S_2$ are chosen at
  random among the $n-\Delta-1$ IDs that are not assigned to the nodes
  of $S_1$. Applying Lemma \ref{lemma:randomview} we get that the
  probability that $S_2$ does not get a color is at least
  $1/(8(n-\Delta-1)^{3\rho})\ge1/(8n^{3\rho})$ independently of
  whether $S_1$ does get a color. The probability that the starts
  $S_1,\dots,S_{n^\eps}$ all get a color therefore is at most
  \[
  \prod_{i=0}^{n^\eps-1}\left(1-\frac{1}{8(n-i(\Delta+1))^{3\rho}}\right)
  \le
  \left(1-\frac{1}{8n^{3\rho}}\right)^{n^\eps}
  \le
  e^{-\frac{n^\eps}{8n^{3\rho}}}
  \le
  e^{-n^\rho/8}.
  \]
  Hence, there is a constant $\eta>0$ such that $\eta\Delta\ln
  n/\ln\ln n$ colors do not suffice with probability at least
  $1-e^{-n^\rho/8}$ for a positive constant $\rho$. The lemma thus
  follows.
\end{proof}




\end{document}